\def\dOi{11(1:12)2014}
\newcommand{\gneg}{\neg} 
\newcommand{\ada}{\mbox{\Large $\sqcap$}} 
\newcommand{\ade}{\mbox{\Large $\sqcup$}} 
\begin{document}

\title{The Computational Complexity of Propositional Cirquent Calculus}

\author[M.~S.~Bauer]{Matthew S. Bauer}	
\address{University of Illinois at Urbana-Champaign, U.S.A.}	
\email{msbauer2@illinois.edu} 

\keywords{cirquent calculus, computability logic, resource semantics, proof theory, substructural logics}

\begin{abstract}
Introduced in 2006 by Japaridze, cirquent calculus is a refinement of sequent calculus. The advent of cirquent calculus arose from the need for a deductive system with a more explicit ability to reason about resources. Unlike the more traditional proof-theoretic approaches that manipulate tree-like objects (formulas, sequents, etc.), cirquent calculus is based on circuit-style structures called {\em cirquents}, in which different “peer” (sibling, cousin, etc.) substructures may share components. It is this resource sharing mechanism to which cirquent calculus owes its novelty (and its virtues). From its inception, cirquent calculus has been paired with an abstract resource semantics. This semantics allows for reasoning about the interaction between a resource provider and a resource user, where resources are understood in the their most general and intuitive sense. Interpreting resources in a more restricted computational sense has made cirquent calculus instrumental in axiomatizing various fundamental fragments of \emph{Computability Logic}, a formal theory of (interactive) computability. The so-called ``classical" rules of cirquent calculus, in the absence of the particularly troublesome \emph{contraction} rule, produce a sound and complete system CL5 for Computability Logic. In this paper, we investigate the computational complexity of CL5, showing it is $\Sigma_2^p$-complete. We also show that CL5 without the \emph{duplication} rule has polynomial size proofs and is NP-complete. 
\end{abstract}

\maketitle

\section{Introduction}

Introduced in 2006 by Japaridze \cite{ICC}, cirquent calculus is a refinement of classical sequent calculus. The advent of cirquent calculus arose from the need for a deductive system with a more explicit ability to reason about resources. Unlike the more traditional proof-theoretic approaches that manipulate tree-like objects (formulas, sequents, etc.), cirquent calculus is based on circuit-style structures called {\em cirquents}, in which different “peer” (sibling, cousin, etc.) substructures may share components. It is this resource sharing mechanism to which cirquent calculus owes its novelty (and its virtues). Cirquents come in a variety of forms. One way to characterize the sort of cirquents studied in this paper in familiar terms is to say that a cirquent is a multiset of sequents (called “groups”) where each formula --- more precisely, each occurrence of a formula --- may simultaneously belong to more than one sequent. This explains the origin of the word “cirquent”, which is a hybrid of “circuit” and “sequent”. From its inception, cirquent calculus has been paired with an abstract resource semantics. This semantics allows for reasoning about the interaction between a resource provider and a resource user, where resources are understood in the their most general and intuitive sense. 

Interpreting resources in a more restricted computational sense has made cirquent calculus instrumental in axiomatizing various fundamental fragments of \emph{Computability Logic}, a formal theory of (interactive) computability. The so-called ``classical" rules of cirquent calculus, in the absence of the particularly troublesome \emph{contraction} rule, produce a sound and complete system CL5 for Computability Logic. Born in \cite{ICLI}, Computability Logic (CoL) is an ambitious research program aimed at developing a formal theory of interactive computability. To this end, formulas of CoL represent computational problems modeled at games. The notion of ``truth" for such formulas becomes synonymous with the notion of ``computability" for the computational problems they represent.      

While CL5 has shown to be sound and complete with respect Japridze's abstract resource semantics, it has also shown to validate a strictly larger class of formulas than affine logic (sequent calculus without the contraction rule), the latter being well studied and sound as a logic of resources. However, due to no shortcomings in effort, a complete resource aware semantics has not been found for affine logic. This had lead Japridze to conclude that ``CL5 rather than affine logic adequately materializes the resource philosophy traditionally associated with the latter". Indeed, the ``semantics before syntax" philosophy upon which CL5 was conceived, together with its completeness result provide compelling evidence in this direction. 

In parallel with the development of cirquent calculus, recent work \cite{PCLI, PCLII, FTCI, FTCII, TRI} has introduced sound and complete axiomizations for several fragments of CoL, the most fundamental of these being CL4, which contains the propositional connectives $\neg$ (negation), $\vee$ (parallel disjunction), $\wedge$ (parallel conjunction), $\sqcup$ (choice disjunction) and $\sqcap$ (choice conjunction) as well as the ``choice" quantifiers $\ade$ and $\ada$ and the ``blind" quantifiers $\forall$ and $\exists$. For a full discussion of these operators and the semantics of CoL, see \cite{GS}. While the semantics of CoL differs drastically from that of classical logic, the set of valid formulas of CL4 is identical to that of classical logic when restricted to its operators and the so-called ``elementary" sort of atoms. Thus, the presence of the quantifiers $\forall$ and $\exists$ within CL4's language guarantees its undecidability. In \cite{FTCII}, however, the $\forall$,$\exists$-free fragment of CL4 was shown to be decidable in polynomial space. A PSPACE-completeness proof was later given in \cite{MSB}.

The deductive apparatus of CL4 is far removed from traditional Gentzen or Hilbert style axiomizations. Further, the system does not extend very naturally to other fragments of Computability Logic. A rectification to this shortcoming came with the advent of cirquent calculus. As mentioned previously, the fragment of cirquent calculus not containing the  \emph{contraction} rule, produces a sound and complete system for the fragment of Computability Logic known as CL5. This system contains the operators $\vee$, $\wedge$ and $\neg$ and in this paper we show it to be $\Sigma_2^p$-complete.  

We further investigate the complexity of the logical system described by the rules of CL5, but with the absence of the \emph{duplication} rule. We call this logic CL5$^-$ and show that every theorem in CL5$^-$ has a polynomial size proof. This effectively places the problem of deciding CL5$^-$ provability in $NP$. A reduction from the $k$ vertex cover problem to CL5$^-$ provability is then given,  solidifying the logic as NP-complete.  In some sense, CL5 is the CoL ``counterpart" to multiplicative linear logic, the former being a proper extension of the latter. Here, CL5$^{-}$ lies strictly between multiplicative affine logic (linear logic with weakening) and CL5. Among the virtues of multiplicative affine logic is that, unlike the coNP-complete classical logic, it is in NP. CL5$^-$ thus presents a nice and natural extension of multiplicative affine logic that gets ``closer" to classical logic while still remaining in NP. 

With the tight relationship between cirquent calculus and Computability Logic, one should not overlook the potential impact of cirquent calculus on other areas of logic and proof theory. In \cite{IFL}, Xu constructs a cirquent calculus based system for the propositional fragment of independence friendly (IF) logic, allowing one to account for independence from propositional connectives in the same vein that traditional IF logic accounts for independence from quantifiers. In \cite{CCD} a cirquent calculus based proof system was developed which, among other benifits, yields polynomial size proofs for all instances of the pigeonhole principle. To date, this is the first proof system to achieve such a result without leveraging cut, extension or substitution.  

This paper is organized as follows. In section \ref{2}, we provide an introduction to the rules of cirquent calculus and describe the systems CL5 and CL5$^-$. Section \ref{3} studies the system CL5$^-$ and proves its NP-completeness. In the final section, we show the full CL5 to be $\Sigma_2^p$-complete by a reduction from the TQBF-$\Sigma_2$ problem.

\section{Core cirquent calculus}\label{2}

The present section provides the technical definitions and information necessary for understanding the results of this paper. For a full discussion of the concepts of cirquent calculus, see \cite{ICC, GS}. As noted previously, a \textbf{formula} of the language of cirquent calculus is syntactically identical to that of classical propositional logic, where formulas are built from the connectives $\neg$, $\wedge$ and $\vee$ as well as non-logical \textbf{atoms} (also known as propositional letters). The only difference between the two being that the language of cirquent calculus does not contain logical atoms such as $\top$ or $\perp$; the latter can simply be understood as abbreviations of $P \wedge \neg P$ and $P \vee \neg P$, respectively, for some (whatever) atom $P$. Our language further mandates that $\neg$ can only be applied to atoms. As always, a \textbf{literal} is an atom with or without the prefix $\neg$.   

A $k$-ary \textbf{pool}, for $k \in \mathbb{N}$, is a sequence Pl = $\langle F_1, F_2,...,F_k \rangle$ of formulas. The formulas in a pool are not required to be unique. We refer to a particular occurrence of a formula as an \textbf{oformula}. A $k$-ary \textbf{structure}, for $k \in \mathbb{N}$, is a finite sequence St = $\langle \Gamma_1, \Gamma_2,...,\Gamma_m \rangle$ for $m \geq 0$, where each $\Gamma_i$, called a \textbf{group} of St, is a subset of $\{1, 2,...,k\}$. It is permitted that $\Gamma_i = \Gamma_j$ for $i \neq j$ and hence we use the term \textbf{ogroup} to refer to a particular occurrence of a group. We are now ready to give our definition for a cirquent. 

\begin{defi} \cite{ICC}
A $k$-ary ($k \geq 0$) \textbf{cirquent} is a pair $C$ = (St$^C$, Pl$^C$), where St$^{C}$, called the structure of $C$, is a $k$-ary \textbf{structure}, and PL$^C$, called the \textbf{pool} of $C$, is a $k$-ary pool. 
\end{defi} 

For example, let Pl$^C$ =  $\langle A, B, A, D \rangle$ and St$^C$ = $\langle \{1, 2\}, \{2, 3\}, \{4\} \rangle$. Here, $C$ has 4 oformulas and 3 ogroups and is typically represented by the diagram below.

 \begin{center} 
 \begin{picture}(98,54)
\put(0,44){\line(1,0){98}}
\put(0,32){$A$}
\put(30,32){$B$}
\put(60,32){$A$}
\put(90,32){$D$}

\put(20,10){\line(-2,3){13}}
\put(20,10){\line(4,5){15}}
\put(49,10){\line(-4,5){15}}
\put(49,10){\line(4,5){15}}
\put(78,10){\line(4,5){15}}
\put(20,10){\circle*{5}}
\put(49,10){\circle*{5}}
\put(78,10){\circle*{5}}
\end{picture}
\end{center}

Following \cite{ICC}, we will adopt the notion that a diagram simply ``is" (rather than just ``represents") a cirquent. When an ogroup $\Gamma$ is connected with an arc to an oformula $G$, we say that $\Gamma$ \textbf{contains} $G$. 

Before defining the rules for our inference system, let us first give some additional terminology necessary for the definition of our rules. For a given cirquent, two oformulas $F$ and $G$ are called  \textbf{adjacent} if $G$ is positioned immediately to the right of $F$.  In such a case, it is said that $F$ \textbf{immediately precedes} $G$ and $G$ \textbf{immediately follows} $F$.  \textbf{Merging two adjacent ogroups} $\Gamma$ and $\Delta$ in a given cirquent $C$ means replacing in $C$ the two ogroups $\Gamma$ and $\Delta$ by a single ogroup $\Gamma \cup \Delta$. The rightmost cirquent below represents the cirquent that results from merging the first and second ogroups in the leftmost cirquent. 

\begin{center} \begin{picture}(230,54)

\put(0,44){\line(1,0){98}}
\put(0,32){$A$}
\put(30,32){$B$}
\put(60,32){$A$}
\put(90,32){$D$}

\put(20,10){\line(-2,3){13}}
\put(20,10){\line(4,5){15}}
\put(49,10){\line(-4,5){15}}
\put(49,10){\line(4,5){15}}
\put(78,10){\line(4,5){15}}
\put(20,10){\circle*{5}}
\put(49,10){\circle*{5}}
\put(78,10){\circle*{5}}

\put(140,44){\line(1,0){92}}
\put(140,32){$A$}
\put(170,32){$B$}
\put(197,32){$A$}
\put(224,32){$D$}

\put(173,10){\line(-5,4){25}}
\put(173,10){\line(5,4){25}}
\put(173,10){\line(0,1){19}}
\put(212,10){\line(4,5){15}}
\put(173,10){\circle*{5}}
\put(212,10){\circle*{5}}
\end{picture}
\end{center}

\textbf{Merging two adjacent oformulas} $F$ and $G$ into a new oformula $H$ is the result of replacing $F$ an $G$ by $H$ and redirecting to it all of the arcs that were pointing to $F$ or $G$. The rightmost cirquent below represents the cirquent obtained from merging, in the leftmost cirquent, (the first) $A$ and $B$ into a single oformula $E$.  

\begin{center} \begin{picture}(230,54)

\put(0,44){\line(1,0){98}}
\put(0,32){$A$}
\put(30,32){$B$}
\put(60,32){$A$}
\put(90,32){$D$}

\put(20,10){\line(-2,3){13}}
\put(20,10){\line(4,5){15}}
\put(49,10){\line(-4,5){15}}
\put(49,10){\line(4,5){15}}
\put(78,10){\line(4,5){15}}
\put(20,10){\circle*{5}}
\put(49,10){\circle*{5}}
\put(78,10){\circle*{5}}

\put(158,44){\line(1,0){80}}
\put(170,32){$E$}
\put(200,32){$A$}
\put(230,32){$D$}
\put(162,10){\line(2,3){12}}
\put(189,10){\line(-4,5){15}}
\put(189,10){\line(4,5){15}}
\put(218,10){\line(4,5){15}}
\put(162,10){\circle*{5}}
\put(189,10){\circle*{5}}
\put(218,10){\circle*{5}}
\end{picture}
\end{center}

With these definitions in mind, we are now equipped to present the so-called ``core" cirquent calculus rules. The rules we give here are only those relevant to our language and are by no means exhaustive. In its full generality, CoL encompasses numerous other logical operators all accompanied by a deep and meaningful semantics. As predicted earlier in \cite{ICC}, recent works have produced cirquent calculus axiomazations for logics containing some of the more powerful operators of Computability Logic. See, for example, \cite{TRI, TRII}.

\subsection{Axioms (A)}   
Axioms are rules with an empty set of premises. In cirquent calcus, they come in two forms: the \textbf{\emph{empty cirquent axiom}} and the \textbf{\emph{identity axiom}}. Both varieties of the axiom rule are illustrated below. It is important to note that the \emph{identity axiom} is actually a scheme of infinitely many axioms, as $F$ stands for an arbitrary formula.

\begin{center} \begin{picture}(209,76)
\put(0,63){\em empty cirquent axiom}
\put(28,46){\line(1,0){36}}
\put(69,44){\scriptsize A}

\put(150,63){\em identity axiom}
\put(160,46){\line(1,0){36}}
\put(200,44){\scriptsize A}
\put(160,33){$\gneg F$}
\put(188,33){$F$}
\put(180,10){\line(-3,5){11}}
\put(180,10){\line(3,5){11}}
\put(180,10){\circle*{5}}

\end{picture}
\end{center}

As will be our convention throughout, the letter placed to the right of the horizontal line represents the rule by which the conclusion was obtained.

\subsection{Mix (M)}

The \emph{mix} rule takes two premises. Its conclusion is obtained by placing each of the two premise cirquents side by side in a single cirquent, as illustrated below. 

\begin{center} \begin{picture}(166,102)
\put(0,92){\line(1,0){26}}
\put(100,92){\line(1,0){40}}
\put(100,79){$E$}
\put(132,79){$G$}
\put(0,79){$F$}
\put(17,79){$F$}
\put(52,10){\circle*{5}}
\put(52,10){\line(0,1){16}}
\put(52,10){\line(-1,1){16}}
\put(68,10){\circle*{5}}
\put(84,10){\circle*{5}}
\put(100,10){\circle*{5}}
\put(68,10){\line(0,1){16}}
\put(84,10){\line(-1,1){16}}
\put(84,10){\line(1,1){16}}
\put(100,10){\line(0,1){16}}

\put(0,46){\line(1,0){142}}
\put(145,44){\scriptsize M}
\put(65,33){$E$}
\put(97,33){$G$}
\put(32,33){$F$}
\put(49,33){$F$}
\put(20,56){\circle*{5}}
\put(20,56){\line(0,1){16}}
\put(20,56){\line(-1,1){16}}
\put(103,56){\circle*{5}}
\put(119,56){\circle*{5}}
\put(135,56){\circle*{5}}
\put(103,56){\line(0,1){16}}
\put(119,56){\line(-1,1){16}}
\put(119,56){\line(1,1){16}}
\put(135,56){\line(0,1){16}}
\end{picture}
\end{center}

The remaining rules described in subsections \ref{exch} to \ref{cintro} all take a single premise cirquent. 

\subsection{Exchange (E)}\label{exch} 

The \emph{exchange} rule comes in two varieties, \textbf{\emph{oformula exchange}} and \textbf{\emph{ogroup exchange}}. The conclusion of the \emph{oformula exchange} rule is obtained by swapping the positions of two adjacent oformulas in the premise cirquent. The \emph{ogroup exchange} rule is similar in that it allows two adjacent ogroups in a premise cirquent to exchange positions in the conclusion. In both varieties of the rule, the arcs from each ogroup to its oformulas should be preserved. Below is an example of \emph{oformula} (resp. \emph{ogroup}) \emph{exchange} in which the oformulas F and G (resp. ogroups $\#$1 and $\#$2) are swapped.

\begin{center} \begin{picture}(182,124)
\put(-4,109){\em oformula exchange}
\put(10,92){\line(1,0){53}}
\put(10,79){$E$}
\put(33,79){$F$}
\put(55,79){$G$}
\put(14,56){\line(0,1){19}}
\put(14,56){\line(5,4){23}}
\put(14,56){\circle*{5}}
\put(37,56){\circle*{5}}
\put(59,56){\circle*{5}}
\put(37,56){\line(0,1){18}}
\put(59,56){\line(0,1){18}}

\put(10,46){\line(1,0){53}}
\put(67,44){\scriptsize E}
\put(10,33){$E$}
\put(33,33){$G$}
\put(55,33){$F$}
\put(14,10){\line(0,1){18}}
\put(14,10){\line(5,2){45}}
\put(14,10){\circle*{5}}
\put(37,10){\circle*{5}}
\put(59,10){\circle*{5}}
\put(37,10){\line(0,1){18}}
\put(59,10){\line(0,1){18}}

\put(122,109){\em ogroup exchange}
\put(130,92){\line(1,0){53}}
\put(130,79){$H$}
\put(153,79){$I$}
\put(175,79){$J$}
\put(134,56){\line(0,1){19}}
\put(134,56){\line(5,4){23}}
\put(134,56){\circle*{5}}
\put(157,56){\circle*{5}}
\put(179,56){\circle*{5}}
\put(157,56){\line(0,1){19}}
\put(179,56){\line(0,1){18}}

\put(130,46){\line(1,0){53}}
\put(187,44){\scriptsize E}
\put(130,33){$H$}
\put(153,33){$I$}
\put(175,33){$J$}
\put(134,10){\line(5,4){23}}
\put(134,10){\circle*{5}}
\put(157,10){\circle*{5}}
\put(157,10){\line(-4,3){23}}
\put(157,10){\line(0,1){18}}
\put(179,10){\circle*{5}}
\put(179,10){\line(0,1){18}}
\end{picture}
\end{center}

\subsection{Weakening (W)} The \emph{weakening} rule has two forms; {\bf \emph{ogroup weakening}} and {\bf \emph{pool weakening}}. In the case of \emph{ogroup weakening}, the conclusion is obtained from the premise by adding a new arc between a pre-existing ogroup and oformula pair. In a application of \emph{pool weakening}, the conclusion is obtained by inserting a new oformula at any position in the pool of the premise. 

\begin{center} \begin{picture}(195,124)
\put(0,109){\em ogroup weakening}
\put(20,92){\line(1,0){31}}
\put(20,79){$E$}
\put(43,79){$F$}
\put(24,56){\line(0,1){18}}
\put(24,56){\circle*{5}}
\put(47,56){\circle*{5}}
\put(47,56){\line(0,1){18}}
\put(47,56){\line(-5,4){23}}

\put(20,46){\line(1,0){31}}
\put(55,44){\scriptsize W}
\put(20,33){$E$}
\put(43,33){$F$}
\put(24,10){\line(0,1){18}}
\put(24,10){\line(5,4){23}}
\put(24,10){\circle*{5}}
\put(47,10){\circle*{5}}
\put(47,10){\line(0,1){18}}
\put(47,10){\line(-5,4){23}}

\put(140,109){\em pool weakening}
\put(142,92){\line(1,0){54}}
\put(142,79){$E$}
\put(188,79){$G$}
\put(169,56){\circle*{5}}
\put(169,56){\line(-5,4){23}}
\put(169,56){\line(5,4){23}}

\put(142,46){\line(1,0){54}}
\put(198,44){\scriptsize W}
\put(142,33){$E$}
\put(165,33){$F$}
\put(188,33){$G$}
\put(169,10){\line(-5,4){23}}
\put(169,10){\line(5,4){23}}
\put(169,10){\circle*{5}}

\end{picture}
\end{center}

\subsection{Duplication (D)} 
The \emph{duplication} rule can be applied in one of two ways, called {\bf \emph{downward duplication}} and {\bf \emph{upward duplication}}. The conclusion of the \emph{downward duplication} rule is obtained from its premise by replacing an ogroup with two adjacent ogroups that each have arcs to exactly the same oformulas as the original ogroup. An application of \emph{upward duplication} works in the opposite direction in that the premise cirquent is obtained by replacing an ogroup in the conclusion by two adjacent ogroups that are both identical to the original ogroup.

\begin{center} \begin{picture}(238,122)
\put(0,107){\em downward duplication}
\put(19,92){\line(1,0){53}}
\put(19,79){$E$}
\put(42,79){$F$}
\put(64,79){$G$}
\put(23,56){\line(0,1){19}}
\put(23,56){\line(5,4){23}}
\put(23,56){\circle*{5}}
\put(68,56){\circle*{5}}
\put(68,56){\line(0,1){18}}

\put(19,46){\line(1,0){53}}
\put(76,44){\scriptsize D}
\put(19,33){$E$}
\put(42,33){$F$}
\put(64,33){$G$}
\put(23,10){\line(0,1){18}}
\put(23,10){\line(5,4){23}}
\put(23,10){\circle*{5}}
\put(46,10){\circle*{5}}
\put(69,10){\circle*{5}}
\put(46,10){\line(-5,4){22}}
\put(46,10){\line(0,1){18}}
\put(69,10){\line(0,1){18}}

\put(156,107){\em upward duplication}
\put(169,46){\line(1,0){53}}
\put(169,33){$E$}
\put(192,33){$F$}
\put(214,33){$G$}
\put(173,10){\line(0,1){19}}
\put(173,10){\line(5,4){23}}
\put(173,10){\circle*{5}}
\put(218,10){\circle*{5}}
\put(218,10){\line(0,1){18}}

\put(169,92){\line(1,0){53}}
\put(226,44){\scriptsize D}
\put(169,79){$E$}
\put(192,79){$F$}
\put(214,79){$G$}
\put(173,56){\line(0,1){18}}
\put(173,56){\line(5,4){23}}
\put(173,56){\circle*{5}}
\put(196,56){\circle*{5}}
\put(219,56){\circle*{5}}
\put(196,56){\line(-5,4){22}}
\put(196,56){\line(0,1){18}}
\put(219,56){\line(0,1){18}}

\end{picture}
\end{center}

\subsection{Contraction (C)} 
The \emph{contraction} rule takes a premise cirquent which contains two adjacent and identical oformulas. The conclusion of an application of this rule is obtained from the premise by merging two adjacent and identical oformulas $F$ and $F$ into a single oformula $F$.  

\begin{center} \begin{picture}(266,102)

\put(95,92){\line(1,0){73}}
\put(95,79){$E$}
\put(117,79){$F$}
\put(139,79){$F$}
\put(161,79){$G$}
\put(120,56){\line(0,1){19}}
\put(120,56){\circle*{5}}
\put(119,56){\line(-1,1){19}}
\put(146,56){\line(1,1){19}}
\put(145,56){\circle*{5}}
\put(165,56){\circle*{5}}
\put(165,56){\line(0,1){19}}
\put(100,56){\circle*{5}}
\put(100,56){\line(0,1){19}}
\put(143,56){\line(0,1){19}}

\put(95,46){\line(1,0){73}}
\put(172,44){\scriptsize C}
\put(97,33){$E$}
\put(128,33){$F$}
\put(159,33){$G$}
\put(121,10){\circle*{5}}
\put(143,10){\circle*{5}}
\put(121,10){\line(-1,1){19}}
\put(143,10){\line(1,1){19}}
\put(121,10){\line(3,5){11}}
\put(143,10){\line(-3,5){11}}
\put(102,10){\line(0,1){19}}
\put(102,10){\circle*{5}}
\put(162,10){\line(0,1){19}}
\put(162,10){\circle*{5}}

\end{picture}
\end{center}

\subsection{$\vee$-introduction ($\vee$)}  
The conclusion of this rule is obtained from the premise by merging two adjacent oformulas $F$ and $G$ into a single oformula $F \vee G$ such that all arcs pointing to either $F$ or $G$ now point to $F \vee G$. Two examples of an application of this rule are given below.

\begin{center} 
\begin{picture}(200,110)

\put(20,92){\line(1,0){49}}
\put(20,79){$E$}
\put(40,79){$F$}
\put(60,79){$G$}
\put(23,56){\circle*{5}}
\put(43,56){\circle*{5}}
\put(63,56){\circle*{5}}
\put(23,56){\line(0,1){19}}
\put(43,56){\line(0,1){19}}
\put(43,56){\line(-1,1){19}}
\put(63,56){\line(0,1){19}}

\put(20,46){\line(1,0){49}}
\put(71,44){\scriptsize $\vee$}

\put(20,33){$E$}
\put(40,33){$F \vee G$}
\put(23,10){\circle*{5}}
\put(43,10){\circle*{5}}
\put(63,10){\circle*{5}}
\put(23,10){\line(0,1){19}}
\put(43,10){\line(1,2){10}}
\put(43,10){\line(-1,1){19}}
\put(63,10){\line(-1,2){10}}

\put(120,92){\line(1,0){69}}
\put(120,79){$E$}
\put(140,79){$F$}
\put(160,79){$G$}
\put(180,79){$H$}
\put(123,56){\circle*{5}}
\put(143,56){\circle*{5}}
\put(163,56){\circle*{5}}
\put(123,56){\line(0,1){19}}
\put(143,56){\line(0,1){19}}
\put(143,56){\line(-1,1){19}}
\put(143,56){\line(1,1){19}}
\put(163,56){\line(0,1){19}}
\put(163,56){\line(-1,1){19}}
\put(163,56){\line(1,1){19}}
\put(184,56){\circle*{5}}
\put(183,56){\line(0,1){19}}

\put(120,46){\line(1,0){69}}
\put(191,44){\scriptsize $\vee$}
\put(120,33){$E$}
\put(140,33){$F\vee G$}
\put(180,33){$H$}
\put(123,10){\circle*{5}}
\put(143,10){\circle*{5}}
\put(164,10){\circle*{5}}
\put(123,10){\line(0,1){19}}
\put(143,10){\line(1,2){10}}
\put(143,10){\line(-1,1){19}}
\put(164,10){\line(-1,2){10}}
\put(164,10){\line(1,1){19}}
\put(184,10){\circle*{5}}
\put(183,10){\line(0,1){19}}

\end{picture}
\end{center}

\subsection{$\wedge$-introduction ($\wedge$)}\label{cintro} 

This rule takes a premise cirquent that contains two adjacent oformulas $F$ and $G$ such that no ogroup contains both $F$ and $G$ and every ogroup that contains $F$ (resp. $G$) is immediately followed (resp. preceded) by an ogroup containing $G$ (resp. $F$). The conclusion in an application of this rule is obtained from its premise by merging each ogroup that contains $F$ with the ogroup containing $G$ that immediately follows it. The oformulas $F$ and $G$ should then merge into $F \wedge G$. We again give two examples below. 

\begin{center} \begin{picture}(200,100)
\put(20,92){\line(1,0){49}}
\put(20,79){$E$}
\put(40,79){$F$}
\put(60,79){$G$}
\put(23,56){\circle*{5}}
\put(23,56){\line(0,1){19}}
\put(43,56){\circle*{5}}
\put(43,56){\line(0,1){18}}
\put(43,56){\line(-1,1){19}}
\put(64,56){\circle*{5}}
\put(64,56){\line(0,1){18}}

\put(20,46){\line(1,0){49}}
\put(71,44){\scriptsize $\wedge$}

\put(20,33){$E$}
\put(40,33){$F \wedge G$}
\put(23,10){\circle*{5}}
\put(23,10){\line(0,1){19}}
\put(53,10){\circle*{5}}
\put(53,10){\line(0,1){18}}
\put(53,10){\line(-3,2){30}}

\put(120,92){\line(1,0){69}}
\put(120,79){$E$}
\put(120,33){$E$}
\put(120,46){\line(1,0){69}}
\put(140,79){$F$}
\put(160,79){$G$}
\put(180,79){$H$}
\put(180,33){$H$}
\put(140,33){$F\wedge G$}
\put(123,56){\circle*{5}}

\put(123,56){\line(0,1){19}}
\put(143,56){\circle*{5}}
\put(143,56){\line(0,1){18}}
\put(142,56){\line(-1,1){19}}
\put(143,56){\line(2,1){40}}
\put(166,57){\line(-5,2){43}}
\put(164,56){\line(1,1){19}}
\put(164,56){\circle*{5}}
\put(164,56){\line(0,1){18}}
\put(183,56){\line(0,1){20}}
\put(184,56){\circle*{5}}

\put(191,44){\scriptsize $\wedge$}

\put(123,10){\circle*{5}}
\put(123,10){\line(0,1){19}}
\put(153,10){\circle*{5}}
\put(153,10){\line(0,1){18}}
\put(153,10){\line(-3,2){30}}
\put(153,10){\line(3,2){30}}
\put(184,10){\circle*{5}}
\put(183,10){\line(0,1){20}}

\end{picture}
\end{center}

\subsection{The systems CCC, CL5 and CL5$^-$}

The cirquent calculus system built from all eight of the above rules has been aptly named ``Classical Cirquent Calculus", or \textbf{CCC}. By removing the \emph{contraction} rule from CCC, we get the system \textbf{CL5}. We further let \textbf{CL5}$^{-}$ denote the system that results from additionally removing the \emph{duplication} rule from CL5.

Let $S$ be one of the cirquent calculus systems \textbf{CCC}, \textbf{CL5} or \textbf{CL5}$^{-}$. A \textbf{proof} of a cirquent $C$ in $S$ is a tree of cirquents whose root is $C$ where each node follows from its children by one of the rules of $S$. When we say a formula $F$ is provable in $S$, we mean the cirquent containing a single oformula $F$ with one ogroup and arc is provable in $S$. In this paper, for simplicity, we are only interested in proving formulas, even though all of our results almost straightforwardly extend from formulas (as special cases of cirquents) to all cirquents.   Correspondingly, we agree that, unless suggested otherwise by the context, ``provability" means ``provability of formulas". It was shown in \cite{ICC} that the provable formulas of \textbf{CCC} coincide exactly with those of classical propositional logic. Interestingly enough, the provable formulas of \textbf{CL5} can also be described in a very natural way. Because we will rely on this result and its associated concepts in several of our later proofs, we shall give the relevant details here.

 A \textbf{substitution} for a formula $C$ is a function $\sigma$ that maps every atom $P$ in $C$ to some formula $\sigma(P)$. If $\sigma(P)$ is an atom for every $P$ in $C$, then we say that $\sigma$ is an \textbf{atomic-level substitution}. This notion can be extended to all formulas by requiring that $\sigma(\neg P) = \neg \sigma(P)$, $\sigma(F \vee G) = \sigma(F) \vee \sigma(G)$ and $\sigma(F \wedge G) = \sigma(F) \wedge \sigma(G)$. Let $A$ and $B$ be formulas. $B$ is said to be an \textbf{instance} of $A$ iff there exists a substitution $\sigma$ such that $\sigma(B) = A$. If $\sigma$ is an atomic level substitution, then $B$ is an \textbf{atomic-level instance} of $A$. A formula is called \textbf{binary} iff no atom has more than two occurrences in it. A binary formula is said to be \textbf{normal} iff, whenever an atom occurs twice in the formula, one occurrence is positive and the other is negative. The following theorem is a combination of Theorem 12 and Lemma 9 of \cite{ICC}. 

\begin{thm}\label{thm:cl5bin} (Japaridze) A formula is provable in CL5 iff it is an instance of a binary tautology iff it is an atomic-level instance of a normal binary tautology.
\end{thm}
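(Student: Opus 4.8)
The plan is to prove the two biconditionals by establishing three implications among the conditions, which I label (1) provability in CL5, (2) being an instance of a binary tautology, and (3) being an atomic-level instance of a normal binary tautology: I would prove the soundness implication (1)$\Rightarrow$(3), the purely combinatorial equivalence (2)$\Leftrightarrow$(3), and the completeness implication (3)$\Rightarrow$(1), which together yield all three equivalences. The whole argument rests on one easy preliminary, namely that CL5 is closed under substitution: if $T$ is provable and $\sigma$ is any substitution then $\sigma(T)$ is provable, which follows by applying $\sigma$ to every oformula of a given proof, since an identity axiom $\neg A,A$ becomes the identity axiom $\neg\sigma(A),\sigma(A)$, while mix, exchange, weakening and duplication ignore the internal structure of oformulas and the two introduction rules commute with $\sigma$. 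Granting this, completeness reduces to showing that every normal binary tautology is CL5-provable. I will also use freely the fact, quoted in the excerpt, that the provable formulas of CCC are exactly the classical tautologies, together with the inclusion of CL5 in CCC.

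For soundness, (1)$\Rightarrow$(3), I would fix a CL5 proof $\Pi$ of $F$ and relabel occurrences. Every literal occurrence in the conclusion $F$ can be traced upward through $\Pi$ to the unique event that created it, which is either an identity axiom or a pool weakening, since mix merely juxtaposes, exchange permutes, duplication copies ogroups (arcs) rather than oformulas, and the introduction rules only merge existing oformulas. I assign fresh atoms so that the two oformulas of each identity axiom stay complementary (the same fresh atom at mirrored positions of $A$ and $\neg A$, with disjoint fresh atoms for distinct positions and distinct axioms) and so that each weakening-introduced occurrence gets its own fresh atom. Propagating these labels downward yields a valid CL5 proof $\Pi'$ whose conclusion I call $T$; validity is preserved precisely because contraction is absent, so no rule ever needs two oformulas to be literally identical. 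By construction each fresh atom occurs at most twice in $T$, and when twice it occurs once positively and once negatively, so $T$ is normal binary; the map undoing the freshening is an atomic-level substitution $\sigma$ with $\sigma(T)=F$, exhibiting $F$ as an atomic-level instance of $T$; and $T$, being CL5-provable, is a CCC theorem and hence a tautology.

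The equivalence (2)$\Leftrightarrow$(3) is combinatorial. One direction is immediate, since a normal binary tautology is a binary tautology and an atomic-level substitution is a substitution. For the converse, write $F=\sigma(T)$ with $T$ a binary tautology and proceed in two steps. First I normalize $T$: if an atom $P$ occurs twice with the same polarity then $T$ is monotone in $P$, and renaming the two occurrences to distinct atoms keeps the formula a tautology, because its value at the all-$0$ (resp.\ all-$1$) setting of the new atoms equals $T|_{P=0}$ (resp.\ $T|_{P=1}$) and monotonicity bounds every other setting below (resp.\ above) it; iterating produces a normal binary tautology $T_0$ with $T=\tau(T_0)$ for an atomic-level $\tau$. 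Second I make the substitution atomic: for each atom $R$ of $T_0$ let $\nu(R)$ be a copy of $(\sigma\circ\tau)(R)$ with all atoms renamed to fresh, pairwise disjoint atoms, and set $T_0'=\nu(T_0)$. Then $T_0'$ is a substitution instance of the tautology $T_0$, hence a tautology; using fresh atoms per position and pairing each positive copy with the De Morgan dual of the matching negative copy keeps $T_0'$ normal binary; and undoing the freshening is an atomic-level substitution $\rho$ with $\rho(T_0')=(\sigma\circ\tau)(T_0)=F$.

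It remains to prove completeness, (3)$\Rightarrow$(1); by substitution closure this is exactly the statement that every normal binary tautology is CL5-provable, and this is where I expect the real work. I would prove, more generally and by induction on the number of connectives in the pool, that every \emph{valid} binary cirquent (one in which every ogroup, read as the disjunction of its oformulas, is a tautology) is provable. A top-level $\vee$ in any oformula is trivial to remove: un-merging $G\vee H$ into adjacent $G,H$ fed to every ogroup that held $G\vee H$ is a backward application of $\vee$-introduction and changes neither validity nor binarity. The genuine obstacle is the $\wedge$ case, together with the base case of cirquents whose pool consists only of literals. Removing a top-level $\wedge$ amounts to inverting $\wedge$-introduction, which requires splitting each shared ogroup that contains $G\wedge H$ into two adjacent ogroups, one feeding $G$ and one feeding $H$, and distributing the remaining members of that ogroup between the two halves; the split is not forced, and one must choose it so that the resulting cirquent is still valid, still binary, and still meets the adjacency side-conditions of $\wedge$-introduction. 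In the literal base case the task is to realize each tautological ogroup through a complementary pair $P,\neg P$ supplied by an identity axiom, while binarity forces a single such pair to serve, through sharing, all ogroups that contain those two occurrences. It is exactly here that the resource-sharing mechanism of cirquents, as opposed to ordinary sequents, is indispensable, and organizing these choices coherently is the crux of the completeness half of the theorem.
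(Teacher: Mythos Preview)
The paper does not prove this theorem at all: it is stated as a quotation of Japaridze's results, explicitly introduced as ``a combination of Theorem 12 and Lemma 9 of \cite{ICC}''. There is therefore no proof in the present paper to compare your attempt against; the theorem functions here as a black box imported from the literature and used in Sections~\ref{3} and~4.

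That said, your outline is broadly the shape of Japaridze's original argument: the soundness direction is obtained by ``freshening'' the atoms of a CL5 proof (which works precisely because contraction is absent), and completeness is the hard half. Two remarks on the sketch itself. First, your combinatorial step (2)$\Leftrightarrow$(3) is correct but the monotonicity justification could be stated more cleanly: when both occurrences of $P$ have the same polarity the resulting formula is monotone in each of the split atoms, so its minimum over the two new atoms already equals the old tautological value at $P=0$ (or $P=1$). Second, and more importantly, your completeness plan stops exactly at the genuine difficulty. Inverting $\wedge$-introduction requires, for each ogroup $\Gamma$ containing $G\wedge H$, a choice of how to partition $\Gamma\setminus\{G\wedge H\}$ between the $G$-half and the $H$-half so that \emph{both} halves remain tautological; your sketch names this obstacle but does not indicate how to discharge it. In \cite{ICC} this is handled via the abstract resource semantics (winning strategies in the associated games), not by a direct syntactic splitting argument, and a self-contained syntactic proof along the lines you suggest would need a nontrivial combinatorial lemma you have not supplied. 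So as written, your (3)$\Rightarrow$(1) is an outline of what must be shown rather than a proof.
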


At this point, a semantical characterization of the provable formulas of \textbf{CL5}$^-$ has yet to be given. In light of the results of this paper, such a finding would perhaps be very interesting. 

\section{CL5 without duplication is NP-complete}\label{3}

In this section, we show that deciding provability for a formula of the logic \textbf{CL5}$^{-}$ is NP-complete. This result is achieved by a combination of two theorems, the first of which states that every formula $F$ provable in \textbf{CL5}$^{-}$ has a proof whose size is polynomial in the size of $F$. Towards this goal, we begin by presenting a set of technical lemmas.   

\begin{lem}\label{lem:bintautology}
Every formula provable in CL5$^-$ is an atomic-level instance of a normal binary tautology.       
\end{lem}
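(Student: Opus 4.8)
The plan is to leverage Theorem~\ref{thm:cl5bin} as a stepping stone. Since CL5$^-$ is CL5 with the \emph{duplication} rule removed, every CL5$^-$ proof is in particular a CL5 proof. Therefore every formula provable in CL5$^-$ is provable in CL5, and by Theorem~\ref{thm:cl5bin} it is an atomic-level instance of a normal binary tautology. At first glance this appears to close the matter immediately, so I would be alert to the possibility that this reduction is exactly the intended argument and the lemma is simply recording this easy containment as a named fact to be reused later.

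However, I would double-check that Theorem~\ref{thm:cl5bin} really yields the \emph{atomic-level} instance of a \emph{normal} binary tautology and not merely the weaker ``instance of a binary tautology'' clause. The theorem states the chain of equivalences ``provable in CL5 iff instance of a binary tautology iff atomic-level instance of a normal binary tautology,'' so the third characterization is available on equal footing. Thus the argument is: assume $F$ is provable in CL5$^-$; conclude $F$ is provable in CL5 (every CL5$^-$ rule is a CL5 rule, so any CL5$^-$ proof tree is verbatim a CL5 proof tree); then invoke the right-hand equivalence of Theorem~\ref{thm:cl5bin} to obtain that $F$ is an atomic-level instance of a normal binary tautology.

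If instead the author intends something sharper --- for instance, that the proof-theoretic restriction of removing \emph{duplication} forces additional structure on the witnessing normal binary tautology that will be exploited in the subsequent polynomial-size-proof theorem --- then the lemma would need a direct syntactic analysis rather than the black-box appeal above. In that case I would proceed by induction on the structure of a cut-free CL5$^-$ proof, tracking how the \emph{mix}, \emph{exchange}, \emph{weakening}, and introduction rules preserve a normality invariant on the oformulas, and arguing that the absence of \emph{contraction} (already absent in CL5) together with the absence of \emph{duplication} keeps each atom to at most two occurrences along the way.

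The main obstacle, as I see it, is purely one of reading the author's intent: the one-line containment argument is essentially forced and correct, so the real question is whether the lemma is meant as a trivial corollary of Theorem~\ref{thm:cl5bin} (most likely) or as the anchor for a finer structural claim used in the polynomial-size bound. My expectation is the former, and the proof will be short.
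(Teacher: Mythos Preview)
Your proposal is correct and matches the paper's proof exactly: the lemma is just the observation that every CL5$^-$ theorem is a CL5 theorem (since CL5$^-$ is CL5 minus \emph{duplication}), so the ``$\Rightarrow$'' direction of Theorem~\ref{thm:cl5bin} applies. Your hedging about a possible sharper structural claim is unnecessary here---the paper proves it in one line precisely as you anticipated.
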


\begin{proof}
This follows immediately from the ``$\Rightarrow$" direction of Theorem \ref{thm:cl5bin}, given that the theorems of CL5$^-$ form a subset of the theorems of CL5.   
\end{proof}   

\begin{lem}\label{lem:width}
Let $F$ be a formula provable in CL5$^-$ with $n$ positive occurrences of atoms. No CL5$^-$ proof of $F$ can contain a cirquent with more than $n$ ogroups.    
\end{lem}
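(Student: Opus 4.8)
The plan is to control the number of ogroups of any cirquent occurring in a proof by comparing it against a second, more robust quantity. For a cirquent $C$, write $g(C)$ for its number of ogroups and $p(C)$ for the number of positive occurrences of atoms in its pool. I would establish two invariants and then chain them: first, that every cirquent $C$ appearing in a CL5$^-$ proof satisfies $g(C) \le p(C)$; and second, that every such $C$ satisfies $p(C) \le n$. Since the root of the proof is the single-oformula cirquent for $F$, which has $p = n$, chaining yields $g(C) \le p(C) \le n$ for every cirquent in the proof, which is exactly the claim.

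For the first invariant $g(C) \le p(C)$, I would induct on the proof read from the axioms downward toward the root, checking that each rule preserves the inequality. At the empty cirquent axiom both quantities are $0$, and at an identity axiom $\neg F, F$ there is a single ogroup while $p$ counts at least one positive atom occurrence (indeed the positive occurrences in $\neg F, F$ number exactly the atom occurrences of $F$), so $g = 1 \le p$. For the inductive step: \emph{exchange}, \emph{ogroup weakening}, and $\vee$-introduction leave both $g$ and $p$ unchanged; \emph{pool weakening} fixes $g$ and can only raise $p$; \emph{mix} adds the two quantities of its premises, so the inequality passes additively to the conclusion; and $\wedge$-introduction leaves $p$ unchanged while it can only lower $g$, since each merge of an $F$-ogroup with its immediately following $G$-ogroup deletes one ogroup. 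In every case $g \le p$ is inherited.

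For the second invariant $p(C) \le n$, I would instead induct from the root upward, showing that each premise of a rule has $p$ no larger than that of its conclusion. Reading every rule from conclusion to premise, each premise's $p$ is at most the conclusion's: \emph{exchange}, \emph{ogroup weakening}, and the two introduction rules leave $p$ fixed (the introduction rules merely split a merged oformula), \emph{pool weakening} can only remove an oformula and hence lower $p$, and \emph{mix} splits $p$ between its two premises so that each carries at most the total. Starting from $p(\mathrm{root}) = n$, this propagates $p(C) \le n$ to every cirquent in the proof.

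The main obstacle, and the reason for routing the argument through $p$ rather than bounding $g$ directly, is the $\wedge$-introduction rule: read bottom-up it is the one rule that \emph{increases} the ogroup count, so a naive attempt to show that $g$ never grows as one ascends the proof fails outright. The quantity $p$ is chosen precisely so that it dominates $g$ while remaining monotone in the convenient (upward) direction, reducing the task to the rule-by-rule bookkeeping above. The two cases that most need care are the identity axiom, where one must confirm $p \ge 1$, and $\wedge$-introduction, where one must confirm that passing to the premise cannot push $g$ above $p$.
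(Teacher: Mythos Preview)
Your proof is correct and takes a somewhat different route from the paper. The paper argues by contradiction and a global count over leaves: assuming some cirquent $D$ has more than $n$ ogroups, it shows (using the same rule-by-rule monotonicity you exploit) that the subtree above $D$ must have more than $n$ identity-axiom leaves, and then (using the other monotonicity you exploit) that the total number of positive atom occurrences across all leaves is at most $n$, so at most $n$ identity axioms are possible---a contradiction. You instead establish two \emph{pointwise} invariants $g(C)\le p(C)$ and $p(C)\le n$ and chain them. Both arguments rest on the same two observations---that passing from conclusion to premises never decreases total ogroup count and never increases total positive-atom count---but you package them as a potential-function argument rather than a leaf-counting contradiction. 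Your decomposition is arguably cleaner: it gives a direct per-cirquent bound, avoids the detour through leaf counts, and makes the role of the auxiliary quantity $p$ explicit (precisely the bridge needed because $\wedge$-introduction blocks a direct upward bound on $g$). The paper's version, on the other hand, leans only on the aggregate behavior across leaves and so never needs to verify the base case $g\le p$ at a general identity axiom.
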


\begin{proof}
Let $F$ and $n$ be as in the condition of the lemma and let $\Delta$ be a proof tree for $F$. Assume for a contradiction that $\Delta$ has a cirquent $D$ that contains more than $n$ ogroups. Now let $\Delta'$ be the subtree of $\Delta$ rooted at $D$. A careful examination of the rules of CL5$^-$ will show that, no matter which series of rules are applied in $\Delta'$, the total number of ogroups in all the leaves of $\Delta'$ must be greater than or equal to the number of ogroups in $D$. This is because the premise cirquent(s) in the application of a CL5$^-$ rule must, together, contain at least as many ogroups as the conclusion. Every leaf in $\Delta'$ must be derived by either the \emph{identity axiom} or the \emph{empty cirquent axiom} and thus cannot contain more than 1 ogroup. This means $\Delta'$ has at least $n+1$ leaves with a single ogroup, each of which must follow by the \emph{identity axiom}. To be a consequence of the \emph{identity axiom}, a cirquent must contain exactly two oformulas $K$ and $\neg K$. But the total number of positive occurrences of atoms in oformulas in the leaves of $\Delta$ cannot exceed $n$. This is because, by our assumption, the conclusion of $\Delta$ contains a single oformula with $n$ positive literals. Further, the oformulas of the premise cirquent(s) in the application of a CL5$^-$ rule cannot contain more positive occurrences of atoms than the oformulas of the conclusion. This means at most $n$ applications of the \emph{identity axiom} are possible, contradiction.              
\end{proof}  

The \textbf{length of a formula} $F$ is defined as the total number of occurrences of literals and connectives in $F$. The \textbf{size of a group} is the total number of oformulas it contains. The \textbf{size of a cirquent} is then the sum of the lengths of the oformulas in its pool plus the sum of the sizes of each ogroup in its structure. Naturally, the \textbf{size of a proof} is the sum of the sizes of the cirquents it contains.    

\begin{lem}\label{lem:depth}
Let $F$ be a formula provable in CL5$^-$ and let $k$ be the length of $F$. There exists a CL5$^-$ proof of $F$ using $O(k^6)$ applications of rules.  
\end{lem}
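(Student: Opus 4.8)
The plan is to bound three dimensions of a CL5$^-$ proof of $F$ --- the number of ogroups, the number of oformulas, and the number of non-exchange (``structural'') rule applications --- and then to control the \emph{exchange} steps separating consecutive structural steps. I would fix a proof $\Pi$ of $F$ of minimal total length. Lemma~\ref{lem:width} immediately supplies the first bound: every cirquent in $\Pi$ has at most $n\le k$ ogroups. For the second, I would observe that every oformula ever appearing in $\Pi$ must, reading from the leaves down to the root, eventually merge --- via $\vee$- and $\wedge$-introductions --- into the single oformula $F$ of the root cirquent, whose length is $k$. The introductions thus form the internal nodes of one ``assembly tree'' with root $F$; since each introduction creates a distinct connective of $F$, this tree has at most $k$ internal nodes and hence at most $k+1$ leaves. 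Consequently there are at most $k$ introduction steps, the oformulas of any cirquent (an antichain of the tree) number at most $k+1$ and have total length at most $k$, and, adding the at most $(k+1)\cdot n = O(k^2)$ arcs, every cirquent has size $O(k^2)$.

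Next I would bound the remaining structural steps in $\Pi$. The proof of Lemma~\ref{lem:width} shows that at most $n\le k$ \emph{identity axioms} occur; since mixing in an \emph{empty cirquent} changes nothing, a minimal proof uses no \emph{empty cirquent axioms}, so $\Pi$ has at most $k$ leaves and therefore at most $k$ applications of \emph{mix}. Each \emph{pool weakening} contributes one leaf of the assembly tree, so there are at most $k+1$ of them, and each \emph{ogroup weakening} installs one arc, of which any cirquent has at most $O(k^2)$; in a minimal proof each needed arc is installed once, so there are $O(k^2)$ ogroup weakenings. Altogether the structural rule applications number $O(k^2)$. Lemma~\ref{lem:bintautology} plays its part in the background here: it guarantees that the leaves can be taken to be identity axioms on literals, so that no structural step is spent building auxiliary material whose connectives would enlarge the assembly tree.

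It remains to account for the \emph{exchange} steps, which are the only rules that change neither the size nor the oformula/ogroup counts and so could a priori be repeated without bound. In a minimal proof any maximal block of consecutive exchanges merely realizes some permutation of the at most $k+1$ oformulas and at most $n$ ogroups of a fixed cirquent; since every such permutation is achievable by $O(k^2)$ adjacent transpositions, and a minimal proof never takes a longer route, each block has length $O(k^2)$. As the blocks are separated by the $O(k^2)$ structural steps there are $O(k^2)$ of them, giving $O(k^4)$ exchanges. Multiplying the structural and exchange counts and carrying out the bookkeeping crudely --- for instance by allowing a full reshuffle of both the oformula and the ogroup sequences before each weakening and introduction, rather than optimizing the two permutation types separately --- yields the stated bound of $O(k^6)$ rule applications; a tighter accounting would give a smaller exponent.

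The hard part will be making the ``minimal proof'' reductions rigorous without circularity, and in particular justifying the weakening and exchange bounds. Two points need care. First, one must rule out that a minimal proof ever installs an arc by \emph{ogroup weakening} only to have it absorbed later by a $\vee$-introduction, since such detours would inflate the weakening count past the per-cirquent arc bound; this amounts to establishing a weak normal form for CL5$^-$ derivations. Second, the interleaving precondition of $\wedge$-introduction --- that each ogroup containing its left oformula be immediately followed by an ogroup containing its right oformula, with no ogroup containing both --- is exactly what forces exchanges on the ogroup sequence, so the exchange count is genuinely coupled to the ogroup count. It is here that the width bound of Lemma~\ref{lem:width} is indispensable for keeping the reshuffling polynomial.
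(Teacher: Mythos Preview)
Your outline matches the paper's: discard empty-cirquent axioms, bound each species of non-exchange rule separately, then replace every maximal exchange block by a permutation realised in $O(k^2)$ adjacent transpositions. Where you diverge is the bound on \emph{ogroup weakening}. You aim for $O(k^2)$ by arguing that a minimal proof installs each arc once, and you correctly flag this normal-form step as the hard part. The paper sidesteps it entirely with a direct bookkeeping argument: the only rules whose conclusion can have \emph{fewer} arcs than the premise are the two introductions, there are at most $O(k)$ of them, and each removes at most $k^2$ arcs since no cirquent carries more than $k$ ogroups times $O(k)$ oformulas; hence at most $O(k^3)$ arcs are ever removed, and since ogroup weakening is the only rule that \emph{adds} an arc, at most $O(k^3)$ ogroup weakenings occur. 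No minimality hypothesis, and no worry about arcs being installed and later absorbed, is needed.

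With $O(k^3)$ non-exchange rules in hand, the paper then counts exchange blocks per root-to-leaf path: $O(k)$ paths, each containing at most $O(k^3)$ non-exchange steps and hence at most $O(k^3)$ exchange blocks of $O(k^2)$ exchanges each, for $O(k^6)$ exchanges in all. Your ``multiplying the structural and exchange counts'' in the third paragraph is not a valid accounting --- the two kinds of step are summed, not multiplied --- and your tree-wide block count would in fact yield a tighter $O(k^5)$ even with the paper's $O(k^3)$ weakening bound; but the lemma only asks for $O(k^6)$, so the cruder path-wise tally suffices. Finally, your appeal to Lemma~\ref{lem:bintautology} is unnecessary: the bound on identity axioms follows directly from counting positive literal occurrences in $F$, exactly as in the proof of Lemma~\ref{lem:width}, and does not depend on whether the axiom instances are taken on literals or on compound formulas.
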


\begin{proof}
Let $F$ be a formula and let $\Delta$ be a CL5$^-$ proof tree for $F$. Given $\Delta$, we show that a new proof $\Delta^{\prime}$ can be obtained from $\Delta$ such that $\Delta'$ proves $F$ using $O(k^6)$ applications of rules.

We begin by noting that any proof containing applications of the \emph{empty cirquent axiom} can be transformed into one in which no applications of the rule are made\footnote{The only purpose of this axiom in \cite{ICC} is to ensure the provability of the empty cirquent itself; everything else is provable without using the empty cirquent axiom.}. Indeed, observe that the empty cirquent can only be a (``dummy") premise of \emph{mix}. In such a case, the conclusion of \emph{mix} is simply the same as its other premise. Therefore, the empty cirquent can be deleted and the application of \emph{mix} can be skipped. This can be done for all empty cirquents contained in the proof until none remain. Let $\Delta_1$ be the proof that results from removing all applications of the \emph{empty cirquent axiom} in the preceding manner. 

Observe now that $\Delta_1$ must contain exactly one application of \emph{conjunction introduction} or \emph{disjunction introduction} for each occurrence of $\wedge$ or $\vee$ in $F$, respectively. This is because every application of either rule introduces, in its conclusion, a single $\wedge$ or $\vee$ connective that cannot be later removed by the application of any CL5$^-$ rule. The number of positive occurrences of atoms in $F$ also bounds the number of applications of the \emph{identity axiom}. This is because, in a bottom up view of a proof, no CL5$^-$ rule allows formulas to be removed or merged. With the number of applications of \emph{conjunction introduction}, \emph{disjunction introduction} and the \emph{identity axiom} all bounded by $O(k)$ in $\Delta_1$, we can also bound the number of applications of the \emph{mix} rule. To see this, first note that because $\Delta_1$ contains no applications of the \emph{empty cirquent axiom}, every leaf node must be derived by the \emph{identity axiom}. That is, there are $O(k)$ paths from root to leaf in $\Delta_1$. Viewing the proof tree in a top down fashion, \emph{mix} is the only CL5$^-$ rule which allows the proof tree to ``branch", given it is the only rule with more than one premise.  Thus, each application of \emph{mix}, which must take two premises, increases the number of paths from root to leaf by 1. Because the number of such paths is bounded by $O(k)$, so too must be the number of applications of the \emph{mix} rule.  

Each application of \emph{pool weakening} introduces, in its conclusion, an oformula that must be present (perhaps only as a proper subformula of some oformula) in the conclusion of $\Delta_1$. This is because no CL5$^-$ rule can remove, in its conclusion, an oformula contained in its premise cirquent. Thus we have an immediate bound of $O(k)$ for the number of applications of \emph{pool weakening} in $\Delta_1$. Observe that this also implies a bound of $O(k)$ on the number of oformulas contained in any cirquent. To bound the number of possible applications of \emph{ogroup weakening} in $\Delta_1$, notice that no cirquent in $\Delta_1$ can contain more than $k^2$ arcs. This is because, by Lemma \ref{lem:width}, $k$ bounds the maximum number of ogroups in any cirquent and, as noted previously, $O(k)$ bounds the number of oformulas in any cirquent. Further, \emph{conjunction introduction} and \emph{disjunction introduction} are the only CL5$^-$ rules whose conclusion can contain fewer arcs than its premise. The number of applications of these rules in $\Delta_1$ does not exceed $O(k)$ and each application can remove no more than the maximum $k^2$ arcs contained in any premise cirquent. Thus, no more than $O(k^3)$ total arcs can be removed from premise to conclusion for all applications of \emph{conjunction} and \emph{disjunction introduction} in $\Delta_1$. Each application of \emph{ogroup weakening} creates a single arc in its conclusion that was not present in its premise. As the conclusion of $\Delta_1$ contains a single arc, no more than $O(k^3)$ arcs can be created from premise to conclusion in $\Delta_1$ by applications of \emph{ogroup weakening}.

Every leaf of $\Delta_1$ must be derived by the \emph{identity axiom}, which is applied at most $O(k)$ times in $\Delta_1$. That is, there are at most $O(k)$ paths from root to leaf in $\Delta_1$. Each such path $\Delta_1^i$ for $1 \leq i \leq O(k)$ contains at most $O(k^3)$ applications of CL5$^-$ rules other than \emph{oformula} or \emph{ogroup exchange}. Thus, any $\Delta_1^i$ can contain at most $O(k^3)$ sequences $\Phi_w$ for $0 \leq w \leq O(k^3)$ where each $\Phi_w := C_1^w, C_2^w, ... , C_n^w$ (for $n \in \mathbb{N}$) is a sequence of cirquents such that every cirquent $C_i^w$ follows from $C_{i+1}^w$ by \emph{exchange} (of either sort) for all $0 \leq i < n$. In $\Delta_1$, each $\Phi_w$ can be of arbitrary length. We show, however, that a new proof $\Delta'$ can be obtained from $\Delta_1$ such that each $\Phi_w$ uses no more than $O(k^2)$ applications of \emph{oformula} and \emph{ogroup exchange}. This is because, the last cirquent $C_n^w$ in any $\Phi_w$ can always be obtained from $C_1^w$ using no more than $O(k^2)$ applications of \emph{oformula} and \emph{ogroup exchange}. By lemma \ref{lem:width}, the first cirquent $C_1^w$ can contain at most $k$ ogroups. Moving all of these ogroups to an arbitrary position in $C_n^w$ can be done with $O(k^2)$ applications of \emph{ogroup exchange}. Additionally, moving each of $O(k)$ possible oformulas in $C_1^w$ to an arbitrary position in $C_n^w$ also requires no more than $O(k^2)$ applications of \emph{oformula exchange}. This means that every sequence $\Phi_w$ in $\Delta_1$ can be replaced with a new sequence $\Phi'_w$ such that $\Phi'_w$ uses no more than $O(k^2)$ applications of \emph{oformula} or \emph{ogroup exchange}. Let $\Delta'$ be the CL5$^-$ proof tree that results from replacing each $\Phi_w$ in $\Delta_1$ by $\Phi'_w$. $\Delta'$ must again contain $O(k)$ paths from root to leaf. Each such path now has a bound of $O(k^3) \times O(k^2)$ applications of \emph{oformula} or \emph{ogroup exchange}, resulting in a total bound of $O(k^6)$ applications of \emph{exchange} in all branches of $\Delta'$.  
 
It should be clear that if $\Delta$ proves $F$, then $\Delta'$ will also prove $F$. The number of applications of non-\emph{exchange} rules does not change from $\Delta_1$ to $\Delta'$ and remains $O(k^3)$. Further, $\Delta'$ uses no more than $O(k^6)$ applications of \emph{exchange}. Totaling the number of applications of every rule in $\Delta'$, we obtain a bound of $O(k^3) + O(k^6)$.  
\end{proof}  

\begin{thm}\label{NP}
Let $F$ be a formula provable in CL5$^-$ and let $k$ be the size of $F$. There exists a CL5$^-$ proof of $F$ whose size is polynomial in $k$. 
\end{thm}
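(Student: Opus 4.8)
The plan is to combine the $O(k^6)$ bound on the number of rule applications supplied by Lemma~\ref{lem:depth} with a polynomial bound on the size of each individual cirquent in the proof. Since the size of a proof is the sum of the sizes of its constituent cirquents, and since the number of cirquents in a proof tree equals the number of rule applications used to build it, it suffices to show (i) that there is a proof of $F$ with polynomially many cirquents, and (ii) that each such cirquent has polynomial size. Part (i) is exactly Lemma~\ref{lem:depth}, once we observe that the \emph{size} and the \emph{length} of $F$ differ by at most an additive/multiplicative constant, so the two readings of ``$k$'' are interchangeable for the purpose of polynomiality. For part (ii) I must control the three quantities that make up a cirquent's size: the number of oformulas, the length of each oformula, and the total number of arcs (which equals the summed sizes of the ogroups).

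First I would recall, from the proof of Lemma~\ref{lem:depth}, that in the proof $\Delta'$ constructed there every cirquent contains $O(k)$ oformulas and at most $k^2$ arcs. The only quantity not already controlled is the length of an individual oformula. The key observation is a subformula property: every oformula occurring anywhere in $\Delta'$ is a subformula of $F$, and hence has length at most $k$. I would prove this by induction on the distance of a cirquent from the root. The root is $\langle F\rangle$, whose single oformula is $F$ itself. For the inductive step, a case analysis of the CL5$^-$ rules shows that, passing from any conclusion to any one of its premises, each oformula of the premise is a subformula of some oformula of the conclusion: \emph{mix}, \emph{exchange}, and \emph{weakening} preserve the pool of oformulas (or, for \emph{pool weakening}, only add one in the conclusion), while $\vee$- and $\wedge$-introduction merely split an oformula $F'\vee G'$ (resp.\ $F'\wedge G'$) of the conclusion into its immediate subformulas $F'$ and $G'$ in the premise. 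By the inductive hypothesis every oformula of every premise is therefore a subformula of $F$. Crucially, this is what tames the \emph{identity axiom}, whose formula $K$ could a priori be arbitrarily large: since the cirquent $\langle\neg K, K\rangle$ occurs as a premise somewhere in $\Delta'$, both $K$ and $\neg K$ must be subformulas of $F$, so their lengths are bounded by $k$.

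With these three bounds in hand, the size of any cirquent in $\Delta'$ is at most (number of oformulas)$\,\times\,$(maximum oformula length) plus (number of arcs), i.e.\ $O(k)\cdot k + O(k^2) = O(k^2)$. Multiplying this by the $O(k^6)$ bound on the number of cirquents in $\Delta'$ yields a total proof size of $O(k^8)$, which is polynomial in $k$ as required.

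I expect the main (and essentially only) conceptual point to be the subformula property, since the oformula-count and arc-count bounds are already available from the proof of Lemma~\ref{lem:depth}. The absence of \emph{cut} and of \emph{contraction}/\emph{duplication} in CL5$^-$ is precisely what makes the property hold: no rule deletes an oformula in the bottom-up direction, so every oformula is eventually absorbed into the root formula $F$ and must be one of its subformulas. The only routine care needed is to confirm that the informal ``size''/``length'' distinction for the input formula $F$ does not affect the polynomial conclusion.
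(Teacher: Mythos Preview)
Your proposal is correct and follows exactly the paper's two-step strategy: Lemma~\ref{lem:depth} bounds the number of cirquents, and a per-cirquent size bound (which the paper attributes tersely to Lemma~\ref{lem:width}) finishes the job. You actually supply more detail than the paper does---in particular the subformula property controlling oformula lengths---though an even quicker alternative is to observe that the \emph{total} pool length is non-increasing from conclusion to premise under every CL5$^-$ rule, so the sum of oformula lengths in any cirquent of $\Delta'$ is already at most $k$.
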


\begin{proof} 
By lemma \ref{lem:depth}, there exists a CL5$^-$ proof $\Delta$ of $F$ that uses a polynomial number of rule applications. Further, by lemma \ref{lem:width}, the size of  any cirquent in $\Delta$ must be polynomial in $k$. Thus, the maximum size of each cirquent multiplied by the maximum number of cirquents in $\Delta$ yields a polynomial bound on the size of $\Delta$.    
\end{proof}

The result of Theorem \ref{NP} effectively places the provability problem for CL5$^-$ formulas in NP. We further this result by additionally showing that the problem is NP-complete. Before giving the proof, we solidify some standard concepts that will be used within it. A \textbf{graph} $G = (V, E)$ is an ordered pair made up of a set of vertices ($V$) and edges ($E$), with each edge being an unordered pair of vertices. The \textbf{degree} of a vertex $v$ in $G$, denoted deg(v), is defined as the number of edges incident to $v$. Given a graph $G$, a \textbf{vertex cover} is $U \subseteq V$ such that every edge of $E$ is incident to at least one vertex in $V$. In complexity theory, the \textbf{vertex cover problem} can be stated as a decision problems as follows. Given a graph $G$ and number $k$, does $G$ have a $k$ vertex cover, i.e. a vertex cover using at most $k$ vertices? It is well known that this problem is NP-complete.

\begin{thm}
Deciding provability for the logic CL5$^-$ is NP-complete.
\end{thm}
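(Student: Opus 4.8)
The plan is to establish NP-completeness by proving membership in NP and NP-hardness separately. Membership follows almost immediately from the machinery already developed: by Theorem \ref{NP}, every CL5$^-$-provable formula $F$ admits a proof whose size is polynomial in the size of $F$, so a nondeterministic procedure can guess a candidate proof tree of the guaranteed polynomial size and then verify, in deterministic polynomial time, that it is a legal CL5$^-$ proof of $F$. The only point requiring care is that a single rule application can be checked efficiently; this is routine, the most involved case being the side condition of $\wedge$-introduction (adjacency of $F$ and $G$, the requirement that no ogroup contain both, and the immediate-predecessor/successor pairing of ogroups), all of which are verifiable in time polynomial in the size of the cirquents involved. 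Hence CL5$^-$ provability lies in NP.

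For hardness, given a graph $G = (V,E)$ and a bound $k$, I would construct in polynomial time a formula $F_{G,k}$ that is provable in CL5$^-$ precisely when $G$ has a vertex cover of size at most $k$. The guiding intuition is to exploit the resource sharing that is the hallmark of cirquents: a single oformula may belong to many ogroups, so one oformula can play the role of a chosen vertex while the several ogroups arced to it play the role of the edges that vertex covers. Edges thus correspond to ogroups that must each be served by some incident vertex-oformula, and the budget $k$ is encoded so that only $k$ such vertex-resources may be active at once. The feature that makes the reduction land in CL5$^-$ rather than CL5 is the absence of \emph{duplication}: with duplication one could freely replicate ogroups and thereby manufacture extra covering capacity, whereas Lemma \ref{lem:width} caps the number of ogroups in any cirquent of a CL5$^-$ proof by the number of positive atom occurrences in $F_{G,k}$, which is exactly what pins the usable capacity to the budget $k$.

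The correctness argument then splits into two directions. For the forward direction, from a vertex cover $U$ with $|U| \le k$ I would build an explicit CL5$^-$ proof of $F_{G,k}$: each edge is routed to a chosen covering vertex in $U$ through the shared arc structure, the \emph{identity axiom} closes each vertex-literal against its companion, \emph{mix} assembles the independent pieces, and \emph{weakening} absorbs the incidences that go unused, with the bound $|U| \le k$ guaranteeing the construction never exceeds the ogroup budget that CL5$^-$ tolerates. For the backward direction, given any CL5$^-$ proof of $F_{G,k}$ I would extract a cover: by Lemma \ref{lem:bintautology} the provable formula is an atomic-level instance of a normal binary tautology, so its atom occurrences are matched positively to negatively, and the matching realized by the \emph{identity axioms} designates, for each edge gadget, an incident vertex that serves it; Lemma \ref{lem:width} then bounds the number of distinct designated vertices by $k$, yielding a cover of the required size.

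I expect the main obstacle to be the backward direction together with the precise design of the budget gadget, since one must engineer $F_{G,k}$ so that the combinatorial content of the ogroup bound in Lemma \ref{lem:width} corresponds exactly to the cover-size constraint $k$, and so that the forward construction genuinely respects the side conditions of $\wedge$-introduction at every step. A useful sanity check isolating the intended phenomenon is to verify that $F_{G,k}$ is always provable in full CL5, so that it is the removal of \emph{duplication}, and nothing else, that makes provability sensitive to the existence of a small cover.
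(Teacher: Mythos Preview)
Your membership argument is fine and matches the paper's: Theorem~\ref{NP} supplies the polynomial-size certificate, and checking each rule application locally is indeed routine.

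The hardness sketch, however, has a real gap. The paper does not design a bespoke gadget at all; it simply reuses the vertex-cover reduction of Lincoln--Mitchell--Scedrov--Shankar for multiplicative affine logic, namely
\[
f(V,E,k)\;=\;\underbrace{(q\vee\cdots\vee q)}_{k}\;\vee\;\bigvee_{v\in V}\bigl(\neg q\wedge(\underbrace{\neg v\vee\cdots\vee\neg v}_{\deg(v)})\bigr)\;\vee\;\bigwedge_{e=\{u,w\}\in E}(u\vee w).
\]
The forward direction is then free: affine logic already proves $f(V,E,k)$ whenever a $k$-cover exists, and CL5$^-$ extends affine logic. The backward direction is purely semantic and uses only Lemma~\ref{lem:bintautology} (equivalently Theorem~\ref{thm:cl5bin}): from an atomic-level substitution $\sigma$ witnessing that $f(V,E,k)$ is an instance of a normal binary tautology $\beta$, one reads off a cover by looking at which $q$-preimages in $\beta$ are non-isolated and to which $\Theta$-disjunct their negative partner lands. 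The $k$ bound comes from the fact that $\Psi(k)$ has only $k$ positive $q$'s, not from any ogroup count. Lemma~\ref{lem:width} plays no role whatsoever in the hardness proof.

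This is why your ``sanity check'' is exactly backwards. In the paper's reduction, $f(V,E,k)$ is provable in full CL5 if and only if a $k$-cover exists---the formula is \emph{not} always CL5-provable, and the absence of duplication is irrelevant to the equivalence. Your plan hinges on the opposite picture: a formula that CL5 always proves, with duplication-freeness doing the work of enforcing the budget via Lemma~\ref{lem:width}. But Lemma~\ref{lem:width} bounds ogroups by the number of positive atom occurrences in $F_{G,k}$, a quantity fixed by the formula and necessarily at least $2|E|$ once you encode the edges; it does not give you $k$. And once you invoke Lemma~\ref{lem:bintautology} for the backward direction, you are already working with a CL5-level invariant, so if your $F_{G,k}$ were always CL5-provable the binary-tautology matching could not possibly yield a size-$k$ cover. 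In short, the mechanism you attribute to duplication is not the mechanism that makes any known reduction work, and no concrete $F_{G,k}$ realizing your picture is given.
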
   

\begin{proof}
It follows from Theorem \ref{NP} that CL5$^-$ provability is in NP. To see it is NP-hard, we give a polynomial time reduction $f$ to it from the vertex cover problem. Fix some arbitrary graph $G = (V, E)$ and some $k \in \mathbb{N}$. The reduction, borrowed from \cite{DPLL}, follows.  
$$f(V, E, k) := (\Psi(k)) \vee (\Theta(V,E)) \vee (\Omega(E))$$

$$\Psi(k) := \underbrace{q \vee q \vee ... \vee q}_
{\mbox{total of $k$ literals}}$$

$$\Theta(V,E) := \underbrace{(\neg q \wedge \underbrace{(\neg v_1 \vee \neg v_1 \vee ... \vee \neg v_1) }_
{\mbox{deg($v_1$)}} ) \vee ... \vee(\neg q \wedge \underbrace{(\neg v_n \vee \neg v_n \vee ... \vee \neg v_n) }_
{\mbox{deg($v_n$)}})}_
{\mbox{for each vetex  $v_1,v_2,...,v_n \in$ V}}$$

$$\Omega(E) := \underbrace{(e_1^1 \vee e_1^2) \wedge (e_2^1 \vee e_2^2) \wedge ... \wedge (e_m^1 \vee e_m^2)}_
{\mbox{for each edge $e_1,e_2,...,e_m \in$ E, }} $$     

Above, $q$ is a new atom that differs from all other atoms in the formula. In $\Omega(E)$, $e_i^1$ and $e_i^2$ are the vertices on the endpoints of edge $e_i$ for $1 \leq i \leq m$. It is also important to note that each $v_j$ for $1 \leq j \leq n$ represents both an atom in $f(V, E, k)$ as well as the label of a vertex in $G$.

Obviously the mapping $f$ is computable in polynomial time. It now remains to show that a graph $(V,E)$ has a $k$ vertex cover if and only if the formula $f(V,E,k)$ is provable in CL5$^-$.

``$\Rightarrow$" Our reduction is identical to that of \cite{DPLL} (Section 5.2) for multiplicative affine (direct) logic. As a direct consequence of Theorem 3 from \cite{ICC}, CL5$^-$ proves every formula provable in multiplicative affine logic\footnote{From Theorem 3 of \cite{ICC}, we have affine logic = CL5$^*$, where  CL5$^*$ is the system CL5 with the limitation that cirquents contained in proofs do not have groups that share oformulas. Because \emph{upward} (resp. \emph{downward}) \emph{duplication} is only applicable when the premise (resp. conclusion) of the rule is a cirquent in which groups share oformulas, all cirquents provable in CL5$^*$ are also provable in  CL5$^-$.  }, so the result follows immediately. 

``$\Leftarrow$" Assume $f(V,E,k)$ is provable in CL5$^-$. We need to show that there exists a $k$ vertex cover of $G = (V,E)$. By Theorem \ref{thm:cl5bin}, there exists a normal binary tautology $\beta$ and an atomic-level substitution $\sigma$ such that $\sigma(\beta) = f(V,E,k)$. Let $\Psi'$ be the subformula of $\beta$ such that $\sigma(\Psi') = \Psi(k)$. Let $\Pi$ be the set of the $k$ atoms of $\Psi'$, and let $\Sigma \subseteq \Pi$ be the set of those members of $\Pi$ that have (not only positive but also) negative occurrences of $\beta$. For each $l \in \Sigma$, let $\Theta_l$ be the conjunct in the $\Theta(V,E)$ component of $f(V,E,k)$ that contains $\neg \sigma(l)$. By our construction, each $\Theta_l$ takes the form $$(\neg q \wedge (\neg v_j \vee \neg v_j \vee ... \vee \neg v_j))$$ for $1 \leq j \leq n$. Define a vertex cover $V' \subset V$ for $G$ as the set of all vertices in $G$ labeled $v_j$ where $v_j$ is contained in $\Theta_l$ for some $l \in \Sigma$. 

To see that $V'$ is indeed a $k$ vertex cover for $G$, note first that $\sigma$ is an atomic-level substitution, and hence each positive literal $l \in \Sigma$ must be mapped to a unique positive occurrence of $q$ in $f(V,E,k)$. By the definition of $\Psi(k)$, however, there are exactly $k$ positive occurrences of $q$ in $f(V,E,k)$. This means $\Sigma$ contains no more than $k$ atoms and subsequently the vertex cover derived from $\Sigma$ contains no more than $k$ vertices.  
  
 Because $f(V,E,k)$ is an atomic-level instance of $\beta$, both formulas take exactly the same form, the only difference being in the atoms they are built from. Let us now define a model $^*$ for $\beta$. For each $l \in \Pi$ let $l^* =$ $\perp$. For any atom $l \not\in \Pi$ with the property that $\sigma(l) = q$, let $l^* =$ $\top$. Further, for each atom $l$ of $\beta$ where $\sigma(l) = a$ for some $a \in V'$, let $l^* =$ $\top$. All remaining atoms $l$ of $\beta$ should be interpreted as $l^* =$ $\perp$.      
 
 Note that every atom $l$ of the earlier defined $\Phi'$ is interpreted as $l^* = \perp$, meaning that $(\Phi')^* = \perp$. Next let $\Theta'$ be the subformula of $\beta$ such that $\sigma(\Theta') = \Theta(V,E)$. Each disjunct of $\Theta'$ takes the form $\neg l_1 \wedge (\neg l_2 \vee \neg l_3 \vee ... \vee \neg l_n)$ for some positive integer $n$. If $l_1 \not\in \Sigma$ then $l_1^* =$ $\top$ and $\neg (l_1^*) =$ $\perp$, making the entire subformula evaluate to $\perp$ under $^*$. If $l_1 \in \Sigma$ then $l_1^* =$ $\perp$ and $\neg(l_1^*) =$ $\top$. Note, however, that when $l_1 \in \Sigma$ then $\sigma(l_m) = a$ where $a \in V'$ and $m \in \{2, 3, \ldots \}$. This means $l_m^* =$ $\top$ and $\neg (l_m^*) =$ $\perp$, again making the subformula evaluate to $\perp$ under $^*$. Because each disjunct of $\Theta'$ evaluates to $\perp$ under $^*$, we have $(\Theta')^* =$ $\perp$. 

Finally, let $\Omega'$ be the subformula of $\beta$ such that $\sigma(\Omega') = \Omega(E)$. Because $\beta = (\Psi') \vee (\Theta') \vee (\Omega')$ is a tautology where $(\Psi')^* =$ $\perp$ and $(\Theta')^* =$ $\perp$, we have $(\Omega')^* =$ $\top$. Each conjunct of $\Omega'$ takes the form $l_1 \vee l_2$ such that $\sigma(l_1)$ and $\sigma(l_2)$ are the endpoints of an edge in $E$. Further, by our construction, every edge in $E$ is represented by such a conjunct in $\Omega'$. Observer that, by our definition of $^*$, we have $l^* =$ $\top$ for an atom $l$ of $\Omega'$ only when $\sigma(l) \in V'$. Thus, $\Omega'$ is true only when every edge in $E$ has an endpoint in $V'$. That is, $V'$ is a vertex cover of $G$.    
\end{proof}

\section{CL5 is $\Sigma_2^p$-complete}

This section contains our main result, namely, the $\Sigma_2^p$-completeness of the decision problem for provability of formulas in \textbf{CL5}. 

\begin{lem}\label{lem:ptime_algorithm}
Deciding provability for CL5 is in $\Sigma_2^p$. 
\end{lem}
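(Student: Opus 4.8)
The plan is to decide CL5 provability via the semantic characterization of Theorem~\ref{thm:cl5bin}, namely that a formula $F$ is provable in CL5 iff it is an atomic-level instance of a normal binary tautology. The first thing I would observe is that an atomic-level substitution $\sigma$ sends atoms to atoms and therefore preserves both the entire connective structure of a formula and the polarity of every literal occurrence. Consequently, any normal binary tautology $\beta$ with $\sigma(\beta)=F$ has exactly the same formula tree as $F$, and at each leaf position the literal of $\beta$ has the same sign as the corresponding literal of $F$. The only freedom in choosing such a $\beta$ thus lies in how the leaves of $F$ are renamed.

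Next I would recast this freedom combinatorially. Writing $\ell_1,\dots,\ell_n$ for the literal occurrences of $F$ (so $n\le|F|$), a candidate $\beta$ is obtained by choosing a partial matching $M$ on $\{1,\dots,n\}$, giving the two leaves of each pair a common fresh atom and every unmatched leaf its own distinct fresh atom. Binarity is automatic, since each fresh name is then used at most twice; normality forces each matched pair to join two occurrences of opposite polarity; and for the induced $\sigma$ to be a well-defined function recovering $F$, the two leaves of each pair must carry the same atom of $F$. Hence a matched pair is precisely a pair of complementary literal occurrences of $F$ (same atom, opposite sign). Conversely, every normal binary tautology $\beta$ with $\sigma(\beta)=F$ arises from such a matching, its matched pairs being the leaves that share an atom in $\beta$, and renamings of $\beta$ share their tautology status. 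Therefore $F$ is provable in CL5 iff there is a complementary-literal matching $M$ of $F$ for which the renamed formula $\beta_M$ is a propositional tautology.

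This characterization yields the bound directly. A matching $M$ is a set of at most $n/2$ index pairs, hence a witness of size polynomial in $|F|$; verifying that $M$ is a legal complementary-literal matching and constructing $\beta_M$ are both polynomial-time tasks. The remaining requirement, that $\beta_M$ be a tautology, is a coNP predicate, equivalently that every truth assignment satisfies $\beta_M$. The whole problem is thus expressible in the form $\exists M\,\forall z\,R(F,M,z)$, where $R$ holds iff $M$ is a legal matching and the assignment $z$ satisfies $\beta_M$, and $R$ is polynomial-time decidable. One checks this captures provability exactly: when $M$ is illegal the inner predicate fails for every $z$, and when $M$ is legal the $\forall z$ part asserts precisely that $\beta_M$ is a tautology. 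This is the defining shape of a $\Sigma_2^p$ predicate, so CL5 provability lies in $\Sigma_2^p$.

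The main obstacle is the second step: showing that the a priori unbounded search over normal binary tautologies collapses to a polynomial-size guess. The key realization that makes this work is that atomic-level substitutions fix the formula tree and the polarities, so the only datum to be guessed is the identification pattern among leaves, which is a matching of linear size. Once that reduction is in place, the $\exists$-guess / $\forall$-check structure, and hence the $\Sigma_2^p$ bound, follow routinely.
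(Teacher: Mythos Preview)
Your proof is correct and follows essentially the same approach as the paper: both invoke Theorem~\ref{thm:cl5bin} and then argue that provability has the form ``there exists a polynomial-size certificate for the binary preimage, and for all truth assignments the preimage is satisfied.'' The paper's proof is considerably terser---it simply says to existentially guess a binary formula of which $G$ is an instance and then universally check it is a tautology---whereas you spell out explicitly why the guessed object is polynomial-size by reducing it to a matching on literal occurrences; this extra care is justified, since the paper's one-line argument leaves the size bound on the guessed formula implicit.
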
  

\begin{proof}
The following is a $\Sigma_2^p$ algorithm that, if view of Theorem \ref{thm:cl5bin}, decides provability of a formula $G$ in CL5. On input $G$, existentially guess a binary formula $F$ such that $G$ is an instance of $F$. Then, universally guess a truth assignment $^*$ for $F$. If $F$ is true under $^*$, accept. Otherwise, reject. 
\end{proof}

Let TQBF-$\Sigma_2$ be the problem of deciding truth for a quantified Boolean formula of the form $\exists X \forall Y \Theta$, where $X$ and $Y$ are sequences of variables and $\Theta$ is a quantifier-free Boolean formula all of whose variables are among $X$ or $Y$. As shown in Theorem 4.1 of \cite{PTIME}, this problem is $\Sigma_2^p$-complete. We say an atom $z$ is \textbf{isolated} in a formula $\phi$ if there is only a single (positive or negative) occurrence of $z$ in $\phi$. Otherwise it is \textbf{non-isolated}.  

\begin{thm}
Deciding provability for the logic CL5 is $\Sigma_2^p$-complete. 
\end{thm}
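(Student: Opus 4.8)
The plan is to prove $\Sigma_2^p$-completeness by establishing the hardness half, since Lemma \ref{lem:ptime_algorithm} already places CL5 provability in $\Sigma_2^p$. Thus what remains is a polynomial-time reduction from TQBF-$\Sigma_2$ to CL5 provability. Given an instance $\exists X \forall Y\, \Theta$, I would construct in polynomial time a formula $\Phi$ such that $\Phi$ is provable in CL5 if and only if $\exists X \forall Y\, \Theta$ is true. The key leverage is Theorem \ref{thm:cl5bin}: $\Phi$ is provable in CL5 exactly when it is an \emph{instance} of a binary tautology. So the design goal is to engineer $\Phi$ so that (i) choosing a binary formula $F$ of which $\Phi$ is an instance corresponds to fixing the existential assignment to $X$ (this is the outer existential guess in the $\Sigma_2^p$ machine), and (ii) the requirement that $F$ be a \emph{tautology} corresponds to the inner universal quantification over $Y$.

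The heart of the encoding is to exploit the distinction between \emph{isolated} and \emph{non-isolated} atoms, which is exactly why those notions were just defined. For each existential variable $x \in X$, I would arrange two occurrences of a corresponding atom in $\Phi$ so that a substitution witnessing ``$\Phi$ is an instance of a binary $F$'' has a genuine choice: it may either \emph{identify} the two occurrences (forcing them to arise from a single non-isolated atom of $F$, which in a normal binary formula must then appear once positively and once negatively) or \emph{separate} them (treating each as the image of a distinct isolated atom of $F$). This binary choice is what records the truth value assigned to $x$. For each universal variable $y \in Y$, by contrast, I would use isolated occurrences so that the corresponding atoms of $F$ remain free to take either truth value when we test $F$ for being a tautology, thereby ranging over all assignments to $Y$. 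The matrix $\Theta$ is then encoded (in a manner reminiscent of the vertex-cover reduction of the previous section, e.g.\ via a clause-style $\bigwedge$ of $\bigvee$'s built from the edge/literal gadgets) so that $F$ being a tautology amounts to $\Theta$ being satisfied for every assignment to $Y$ once the $X$-choices are fixed.

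The argument would then split into the two directions of the biconditional. For the forward direction, from a winning existential assignment to $X$ I would read off which pairs of $x$-occurrences to merge, thereby exhibiting a specific binary formula $F$; I would verify $F$ is a tautology by checking that every completion over the $Y$-atoms satisfies the encoded $\Theta$, and verify $\Phi$ is an atomic-level instance of $F$, invoking Theorem \ref{thm:cl5bin} to conclude provability. For the reverse direction, I would assume $\Phi$ is provable, extract via Theorem \ref{thm:cl5bin} a binary tautology $F$ and substitution $\sigma$ with $\sigma(F) = \Phi$, and show that the pattern of identified versus separated $x$-occurrences in $F$ must define a consistent assignment to $X$ (using the normality constraint that forces paired occurrences to be one positive and one negative); the tautologicity of $F$ then forces $\forall Y\, \Theta$ to hold under that assignment.

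The main obstacle, I expect, is the precise gadget design that simultaneously (a) forces the substitution $\sigma$ to make a \emph{consistent} and \emph{unambiguous} choice for each $x$ — ruling out ``cheating'' substitutions that merge occurrences across different variables or that exploit the freedom of substituting arbitrary formulas (not just atoms) for atoms of $F$ — and (b) ensures the tautology test over $F$ faithfully simulates universal quantification over $Y$ rather than leaking additional satisfying assignments through the auxiliary atoms of the encoding. Getting the normality/binary constraints to pin down exactly the intended correspondence, and proving that no unintended instance-witness can make $\Phi$ provable, is the delicate part; the rest is a structural verification that the reduction is computable in polynomial time.
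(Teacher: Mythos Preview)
Your high-level strategy matches the paper's: membership is Lemma~\ref{lem:ptime_algorithm}, hardness is by reduction from TQBF-$\Sigma_2$, and Theorem~\ref{thm:cl5bin} is the bridge (the existential guess of a binary preimage encodes the $X$-assignment; the tautology test encodes $\forall Y$). Where your plan goes wrong is in the concrete handling of both variable blocks.

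For the universal block your idea of using \emph{isolated} occurrences for each $y\in Y$ does not simulate $\forall y$ correctly. If $y$ has several occurrences in $\Theta$ and you replace them by pairwise distinct isolated atoms, the tautology test ranges over \emph{independent} truth values for those occurrences, not over the two consistent values of $y$. This breaks the forward direction: take $\Theta=y\vee\neg y$; then $\forall y\,\Theta$ is true, but your encoding yields $a\vee\neg b$ with $a,b$ isolated, which is not a tautology, so no binary preimage can be a tautology and $f(\phi)$ is unprovable. The paper's fix is a nontrivial bipartite gadget: for each $y$ with $r_y$ positive and $s_y$ negative occurrences it introduces fresh atoms $P^{y}_{i,j}$ ($1\le i\le r_y$, $1\le j\le s_y$), replaces the $i$th positive occurrence of $y$ by $P^{y}_{i,1}\vee\cdots\vee P^{y}_{i,s_y}$ and the $j$th negative occurrence by $\neg P^{y}_{1,j}\vee\cdots\vee\neg P^{y}_{r_y,j}$. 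Each $P^{y}_{i,j}$ now appears exactly once positively and once negatively, so this part of $f(\phi)$ is already normal binary, and a monotonicity argument shows that falsity of the encoded consequent under some assignment forces falsity of $\Theta$ under a corresponding $Y$-assignment (and conversely).

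For the existential block, ``identify or separate two occurrences'' is too thin to force a consistent choice and to propagate that choice to every occurrence of $x$ in $\Theta$. The paper instead assumes $\Theta$ is in DNF (not CNF as you suggest), and for each $z\in X$ builds the antecedent gadget
\[
g(z)=Z^z\wedge\bigl(Z^z\rightarrow u^z_1\wedge\cdots\wedge u^z_{k^z}\bigr)\wedge\bigl(Z^z\rightarrow \neg v^z_1\wedge\cdots\wedge\neg v^z_{t^z}\bigr),
\]
with $f(\phi)=\bigl(\bigwedge_z g(z)\bigr)\rightarrow\Theta_1$, where in $\Theta_1$ the positive occurrences of $z$ have been replaced by the distinct $u^z_i$ and the negative by the distinct $\neg v^z_j$. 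The atom $Z^z$ has \emph{three} positive occurrences; a normal binary preimage must collapse at most two of them, and which implication's antecedent is identified with the first $Z^z$ determines (via ``if the antecedent is true then every $u^z_i$ is true'' versus ``every $\neg v^z_j$ is true'') the value $z^\circ$. The implication whose antecedent is an isolated atom can always be made vacuously true, neutralising it. This is exactly the part you flagged as the main obstacle, and it is indeed where your sketch lacks the crucial idea.
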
  

\begin{proof}
By lemma \ref{lem:ptime_algorithm}, the problem of deciding provability for CL5 is in $\Sigma_2^p$. To show it is $\Sigma_2^p$-hard, we construct a polynomial time mapping reduction $f$ from TQBF-$\Sigma_2$ to CL5-provability. We can safely restrict our attention to instances of TQBF-$\Sigma_2$ where the Boolean portion of the formula is in disjunctive normal form, as the complexity of the problem is not reduced with these limitations. Our reduction follows.

Let $\phi$ be a formula that takes the form $\exists X \forall Y \Theta$ with $X$ and $Y$ being sets of variables and $\Theta$ being a boolean formula in disjunctive normal form all of whose variables are among $X \cup Y$. The following steps are used to construct the corresponding CL5 formula $f(\phi)$.

\begin{enumerate}

	\item For each $z \in X$, let $k^z$ and $t^z$ be the number of positive occurrences of the literals $z$ and $\neg z$ in $\Theta$, respectively. Define $$g(z) = Z^z \wedge (Z^z \rightarrow \underbrace{u_1^z \wedge u_2^z \wedge ... \wedge u_{k^z}^z }_{\mbox{$k^z$ literals}}) \wedge (Z^z \rightarrow \underbrace{\neg v_1^z \wedge \neg v_2^z \wedge ... \wedge \neg v_{t^z}^z}_{\mbox{$t^z$ literals}})$$ where $Z^z$, $u_1^z, u_2^z, ... , u_{k^z}^z$ and $v_1^z, v_2^z, ..., v_{t^z}^z$ are all fresh\footnote{Here, fresh variables are those not occurring elsewhere in $\phi$.} pairwise distinct variables, unique\-ly chosen for $z$. Here, if $k^z$ (resp. $t^z$) is 0, the second (resp. third) conjunct should be omitted. Now let $z_1, z_2, ... , z_l = X$ and let $\phi_0$ be the formula $(g(z_1) \wedge g(z_2) \wedge ... \wedge g(z_l)) \rightarrow \Theta_1$. Here $\Theta_1$ is a formula derived from $\Theta$ such that, for each $z \in X$, every positive occurrence of $z$ in $\Theta$ is replaced by a unique literal from $u_1^z, u_2^z, ... , u_k^z$ and every occurrence of $\neg z$ in $\Theta$ is replaced by a unique literal from $\neg v_1^z, \neg v_2^z, ..., \neg v_t^z$. 

\item Consider any $y \in Y$. Let $r_y$ be the number of positive occurrences of $y$ in $\phi_0$, and $s_y$ be the number of negative occurrences. For each pair $i,j$ with $1\leq i\leq r_y$ and $1\leq j\leq s_y$, we choose a fresh and unique variable $P^{y}_{i,j}$. Now, define $f(\phi)$ to be the result of replacing in $\phi_0$, for each $y \in Y$, every positive occurrence of the literal $y$ by $(P_{i,1}^y \vee P_{i,2}^y \vee ... \vee P_{i,s_y}^y)$ and every (positive) occurrence of the literal $\neg y$ by $(\neg P_{1,j}^y \vee \neg P_{2,j}^y \vee ... \vee \neg P_{r_y,j}^y)$ where $1 \leq i \leq r_y$ (resp. $1 \leq j \leq s_y$) is unique for each replacement of an occurrence of $y$ (resp. $\neg y$).\medskip
\end{enumerate}

\noindent``$\Rightarrow$" Assume $\phi$ is true. Then there exists some truth assignment $^{\circ}: X \rightarrow \{\top, \perp\}$ such that $\Theta^*$ is true under any truth assignment $^*$ that extends $^{\circ}$ to $X \cup Y$. By our assumption, $\Theta$ is in disjunctive normal form and hence takes the form $$(\psi_1 \wedge \psi_2 \wedge ... \wedge \psi_n) \vee ... \vee (\psi_{m+1} \wedge \psi_{m+2} \wedge ... \wedge \psi_{m+l})$$ where each $\psi_w$ is either a positive or negative literal. Now let $\Omega$ and $\Sigma$ represent the antecedent and consequent of the outermost implication in $f(\phi)$ such that $f(\phi) = \Omega \rightarrow \Sigma$. Our construction guarantees that $\Sigma$ takes the form $$(\Psi_1 \wedge \Psi_2 \wedge ... \wedge \Psi_n) \vee ... \vee (\Psi_{m+1} \wedge \Psi_{m+2} \wedge ... \wedge \Psi_{m+l})$$ where each  $\Psi_w$ is a disjunction of literals (such a ``disjunction" may have only a single ``disjunct"). That is, $\Sigma$ is obtained from $\Theta$ by replacing each oliteral $\psi_w$ by $\Psi_w$, where $\Psi_w$ is a disjunction of literals. We will henceforth use $\psi_w$ to represent a unique position in $\Theta$ and $\Psi_w$ to represent the corresponding position in $\Sigma$.

We want to show that $f(\phi)$ is provable in CL5. By Theorem \ref{thm:cl5bin}, it suffices to show that $f(\phi)$ is an instance of a binary tautology. We construct a binary tautology $\Phi$ of which $f(\phi)$ is an instance. Namely, we let $\Phi$ be the formula obtained from $f(\phi)$ as follows. For each $z \in X$, if $z^{\circ} = \top$ (resp. $\perp$) replace the third (resp. second) occurrence of $Z^z$ by an atom $Q^z$ such that $Q^z$ does not occur elsewhere in $\Phi$ and is unique for each $z \in X$. It should be easy to see that $\Phi$ is a quantifier free binary formula and $f(\phi)$ is an instance of $\Phi$. We need only show that $\Phi$ is a tautology. Again notice that $\Phi$ takes form $\Phi = \Pi \rightarrow \Sigma$ where $\Sigma$ is the same as in $f(\phi)$. 

Given $X = z_1, z_2, ... , z_l$, we have $\Pi = g(z_1)' \wedge g(z_2)' \wedge ... \wedge g(z_l)'$ where each $g(z)'$	matches one of the following forms\footnote{It is possible that the third (resp. second) conjunct in formula 1 (resp. 2) is absent.}.

\begin{enumerate}
\item $Z^z \wedge (Z^z \rightarrow u_1^z \wedge u_2^z \wedge ... \wedge u_{k^z}^z ) \wedge (Q^z \rightarrow \neg v_1^z \wedge \neg v_2^z \wedge ... \wedge \neg v_{t^z}^z)$

\item $Z^z \wedge (Q^z \rightarrow u_1^z \wedge u_2^z \wedge ... \wedge u_{k^z}^z ) \wedge (Z^z \rightarrow \neg v_1^z \wedge \neg v_2^z \wedge ... \wedge \neg v_{t^z}^z)$

\end{enumerate}

For any $g(z)'$ in $\Pi$ where $z \in X$, if $(Z^z)^{\star} =$ $\perp$ then $\Pi^{\star} =$ $\perp$, and hence $\Phi^{\star} = \top$. If $(Z^z)^{\star} =$ $\top$, then all of the literals in the consequent of the implication in $g(z)'$ with antecedent literal $Z^z$ must be true under $^{\star}$, otherwise we will have $\Pi^{\star} =$ $\perp$ and again $\Phi^{\star} =$ $\top$. Thus, we need only guarantee $\Sigma^{\star} = \top$ under truth assignments such that, for each conjunct $g(z)'$ in $\Pi$, $Z^z$ and every literal in the consequent of the implication containing antecedent literal $Z^z$ are true under $^{\star}$. 

Let $\Theta'$ be the formula that results from replacing in $\Theta$ every positive occurrence of $z$ or $\neg z$, where $z \in X$, by its truth value under $^{\circ}$. For example, if $z^{\circ} =$ $\perp $, replace $\neg z$ by $\top$ and $z$ by $\perp$. It should be easy to see that $\Theta'$ is a tautology. For any position $\psi_w$ that contains $z$ (resp. $\neg z$) in $\Theta$ and $\top$ in $\Theta'$, the position $\Psi_w$ in $\Sigma$ must contain a single literal $u_i^z$ (resp. $\neg v_i^z$) such that $(u_i^z)^{\star} = \top$ (resp. $(\neg v_i^z)^{\star} = \top$) if $\Pi^{\star} = \top$. This is because, by our construction, if $z^{\circ} =$ $\top$ (resp. $(\neg z)^{\circ} =$ $\top$), every $u_i^z$ for $1 \leq i \leq k^z$ (resp. $\neg v_i^z$ for $1 \leq i \leq t^z$) must occur in the consequent of the implication in $\Pi$ with non-isolated antecedent $Z^z$. For reasons already discussed, such a literal must be true under any truth assignment $^{\star}$ where $\Pi^{\star} = \top$. Define $\Sigma^{\prime}$ as the formula such that, for each occurrence of $\top$ or $\perp$ in position $\psi_w$ of $\Theta'$, the literal in position $\Psi_w$ of $\Sigma$ is replaced by the same value ($\top$ or $\perp$) as $\psi_w$. Note that $\Sigma^{\prime}$ only substitutes the logical atom in position $\Psi_w$ of $\Sigma$ when $\Psi_w$ contains a positive or negative occurrence of an atom in $X$. Since our goal is to show that $\Sigma$ is true under truth assignments that make $\Pi$ true, we need only show that $\Sigma^{\prime}$ is a tautology.

For a contradiction, assume $\Sigma^{\prime}$ is not a tautology. Then there exists some truth assignment $^{\dagger}$ defined on the variables of $\Sigma^{\prime}$ such that $(\Sigma^{\prime})^{\dagger} =$ $\perp$. We define a truth assignment $^\ddagger$ for $\Theta'$ as follows. If a variable $y \in Y$ is such that, for some $i$ with $1\leq i\leq r_y$, $(P^{y}_{i,1}\vee\ldots\vee P^{y}_{i,s_y})^\dagger =\bot$, we let $y^\ddagger=\bot$; otherwise we let $y^\ddagger=\top$. It is not hard to see that, if a subformula of $\Sigma'$ in a position $\Psi_w$ is false under $^\dagger$, then the subformula of $\Theta'$ in the corresponding position $\psi_w$ is false under $^\ddagger$ (but not necessarily vice versa).  This, in view of the monotonicity of $\vee$ and $\wedge$,  obviously implies that $(\Theta')^\ddagger =\bot$, because $\Sigma'$ and $\Theta'$ have the same $(\vee,\wedge)$-structures. Now we are dealing with a contradiction, because the tautological $\Theta'$ cannot be false.\medskip 

\noindent ``$\Leftarrow$" Assume $f(\phi)$ is provable in CL5. By Theorem \ref{thm:cl5bin}, for some normal binary tautology $\Phi$ there exists an atomic level substitution $\sigma$ such that $\sigma(\Phi)$ $=$ $f(\phi)$. Let $\Phi = \Pi \rightarrow \Sigma$, where $\Pi$ and $\Sigma$ represent the antecedent and consequent of the outermost implication in $\Phi$. We want to define a truth assignment $^{\circ}: X \rightarrow \{\top, \perp\}$ such that $\Theta^*$ is true for any truth assignment $^*$ that extends $^{\circ}$ to $X \cup Y$. In what follows, we define such a partial truth assignment $^{\circ}$ for $\Theta$ while concurrently defining a partial truth assignment $^{\bullet}$ for $\Phi$. 

\emph{Procedure 1} - Consider a $z \in X$ and let $A$ be the atom of $\Phi$ such that the first occurrence of $Z^z$ in $f(\phi)$ {\bf originates} from A (i.e., A gets replaced by $\sigma(A)=Z^z$) when transitioning from $\Phi$ to $\sigma(\Phi)=f(\phi)$. 

\emph{Case 1}: The second and third occurrences of $Z^z$ in $f(\phi)$ originate from $A$ and $B$, respectively (for some $B \neq A$ in $\Phi$). Define $z^{\circ} = \top$, $A^{\bullet} = \top$ and $B^{\bullet} = \perp$. The consequent\footnote{Here and later in similar contexts, the ``consequent of W" should be understood as the consequent of the implication whose antecedent is W. Similarly for ``antecedent of W".} of the second occurrence of $A$ in $\Phi$ should take the form $a_1 \wedge a_2 \wedge ... \wedge a_{k^z}$ for some positive literals $a_1, a_2, ..., a_{k^z}$ in $\Phi$ and the consequent of $B$ should take the form $\neg b_1 \wedge \neg b_2 \wedge ... \wedge \neg b_{t^z}$ for some negative literals $\neg b_1, \neg b_2, ..., \neg b_{t^z}$. Let all of $a_1, a_2, ..., a_{k^z}$ and $b_1, b_2, ..., b_{t^z}$ be true under $^{\bullet}$. 

\emph{Case 2}: The second and third occurrences of $Z^z$ in $f(\phi)$ originate from $B$ and $A$, respectively (for some $B \neq A$ in $\Phi$). Define $z^{\circ} = \perp$, $A^{\bullet} = \top$ and $B^{\bullet} = \perp$. The consequent of the second occurrence of $A$ in $\Phi$ should take the form $\neg b_1 \wedge \neg b_2 \wedge ... \wedge \neg b_{t^z}$ for some negative literals $\neg b_1, \neg b_2, ..., \neg b_{t^z}$ and the consequent of $B$ should take the form $a_1 \wedge  a_2 \wedge ... \wedge  a_{k^z}$ for some positive literals $ a_1,  a_2, ..., a_{k^z}$. Let all of $a_1, a_2, ..., a_{k^z}$ and $b_1, b_2, ..., b_{t^z}$ be false under $^{\bullet}$.  

\emph{Case 3}: There are only two occurrences of $Z^z$ in $f(\phi)$, both of which originate from $A$. If the consequent of the second occurrence of $A$ in $\Phi$ takes the form $a_1 \wedge a_2 \wedge ... \wedge a_{k^z}$ for some positive literals $a_1, a_2, ..., a_{k^z}$ then define $z^{\circ} = \top$ and $A^{\bullet} = \top$. Further, let all of $a_1, a_2, ..., a_{k^z}$ be true under $^{\bullet}$. If the consequent of the second occurrence of $A$ in $\Phi$ takes the form $\neg b_1 \wedge \neg b_2 \wedge ... \wedge \neg b_{t^z}$ for some negative literals $\neg b_1, \neg b_2, ..., \neg b_{t^z}$ then define $z^{\circ} = \perp$ and $A^{\bullet} = \top$. Further, let all of $b_1, b_2, ..., b_{t^z}$ be false under $^{\bullet}$.

\emph{Case 4}:  If none of the cases $1-3$ are satisfied, then the second and third occurrences of $Z^z$ in $f(\phi)$ must originate from $B$ and $C$, respectively (where $A \neq B$ and $A \neq C$). Define  $z^{\circ} =$ $\perp$, $A^{\bullet} = \top$ and $B^{\bullet} = C^{\bullet} =$ $\perp$. Further, let every atom in the consequent of $B$ evaluate to $\perp$ under $^{\bullet}$ and every atom in the consequent of $C$ evaluate to $\top$ under $^{\bullet}$.

As we remember, $\Theta$ in disjunctive normal form, taking the form $$(\psi_1 \wedge \psi_2 \wedge ... \wedge \psi_n) \vee ... \vee (\psi_{m+1} \wedge \psi_{m+2} \wedge ... \wedge \psi_{m+l})$$ where each $\psi_w$ is a literal. It is also the case that $\Sigma$ takes the form $$(\Psi_1 \wedge \Psi_2 \wedge ... \wedge \Psi_n) \vee ... \vee (\Psi_{m+1} \wedge \Psi_{m+2} \wedge ... \wedge \Psi_{m+l})$$ where each  $\Psi_w$ is a disjunction of literals (possibly with just a single ``disjunct"). Thus, we will use our previous convention in which $\psi_w$ represents a unique position in $\Theta$ and $\Psi_w$ represents the corresponding position in $\Sigma$.

Our construction guarantees that for each position $\psi_w$ in $\Theta$ containing some positive (resp. negative) literal $z$ (resp. $\neg z$) such that $z \in X$, $\Psi_w$ contains a positive (resp. negative) literal $a$ (resp. $\neg a$). By \emph{procedure 1} we have $z^{\circ} = a^{\bullet}$. As in the previous direction, let $\Theta'$ be the formula that results from replacing, in $\Theta$, for every $z \in X$, all positive occurrences of the literals $z$ and $\neg z$ by their truth values under $^{\circ}$. Let $\Sigma'$ be the formula such that, for each position $\psi_w$ in $\Theta'$ containing a logical atom $d \in\{ \top,  \bot\}$, the literal in the corresponding position $\Psi_w$ of $\Sigma$ is replaced by $d$. The interpretation $^{\bullet}$ defined as part of \emph{procedure 1} is such that $\Pi^{\bullet} = \top$. Because $\Phi$ is a tautology, any extension of $^{\bullet}$ defined on all atoms of $\Sigma$ must make $\Sigma$ true. We also know that for each position $\Psi_w$ in $\Sigma'$ that contains a logical atom $\top$ (resp. $\perp$), the literal in position $\Psi_w$ of $\Sigma$ evaluates to $\top$ (resp. $\perp$) under $^{\bullet}$. This means that $\Sigma'$ is true regardless of how its non-logical atoms are interpreted and is a tautology. Our goal is to show that $\Theta'$ is a tautology as well. 

Pick some arbitrary truth assignment $^*$ for $\Theta'$. We show $(\Theta')^{*} = \top$. For each position $\psi_w$ in $\Theta'$ containing a positive literal $y$, the corresponding position $\Psi_w$ in $\Sigma'$ contains a disjunction of positive literals ($q_{i,1}^y \vee q_{i,2}^y \vee ... \vee q_{i,s_y}^y$) such that $\sigma(q_{i,1}^y) = P_{i,1}^y$, $\sigma(q_{i,2}^y) = P_{i,2}^y$ $...$, $\sigma(q_{i,s_y}^y) = P_{i,{s_y}}^y$ for some $1 \leq i \leq r_y$. Similarly, for each position $\psi_w$ in $\Theta'$ containing a negative literal $\neg y$, the corresponding position $\Psi_w$ in $\Sigma'$ contains a disjunction of negative literals ($\neg t_{1,j}^y \vee \neg t_{2,j}^y \vee ... \vee \neg t_{r_y,j}^y$) such that $\sigma(t_{1,j}^y) = P_{1,j}^y$, $\sigma(t_{2,j}^y) = P_{2,j}^y$, $...$, $\sigma(t_{r_y}^y) =  P_{{r_y,j}}^y$ for some $1 \leq j \leq s_y$. We now define a truth assignment $^{\dagger}$ for $\Sigma'$. If $y^* = \top$, let $^{\dagger}$ be such that all of the corresponding literals $\neg t_{1,j}^y, \neg t_{2,j}^y, ..., \neg t_{r_y,j}^y$, for every $1 \leq j \leq s_y$ evaluate to false under $^{\dagger}$. If $y^* = \perp$, let $^{\dagger}$ be such that all corresponding literals $q_{i,1}^y, q_{i,2}^y, ..., q_{i,s_y}^y$, for every $1 \leq i \leq r_y$ evaluate to false under $^{\dagger}$. With some thought, one can see that whenever $\Theta'$ has a false (under $^*$) literal in a position $\psi_w$, $\Sigma'$ has a false (under $^\dagger$) disjunction of literals in the corresponding position $\Psi_w$. So, if $(\Theta')^*=\bot$, then $(\Sigma')^\dagger=\bot$. That is, $(\Sigma')^\dagger=\top$ (which is the case due to the tautologicity of $\Sigma'$) implies $(\Theta^*)=\top$.
\end{proof}
\newpage

\end{document}